\documentclass[conference,letterpaper,onecolumn]{IEEEtran}

\addtolength{\topmargin}{9mm}
\usepackage[margin=1.125in]{geometry}
\usepackage{graphicx} 
\usepackage[cmex10]{amsmath}
\usepackage{amssymb, amsthm, graphicx, xcolor, mathtools,cite}
\usepackage{hyperref}

\theoremstyle{definition}
\newtheorem{definition}{Definition}
\newtheorem{theorem}{Theorem}
\newtheorem{lemma}{Lemma}

\newtheorem{claim}{Claim}
\newtheorem{remark}{Remark}
\newtheorem{observation}{Observation}
\definecolor{ferngreen}{rgb}{0.11, 0.44, 0.26}
\definecolor{iris}{rgb}{0.35, 0.31, 0.81}
\definecolor{BurntOrange}{HTML}{F7921D}
\definecolor{orange(colorwheel)}{rgb}{1.0, 0.5, 0.0}
\definecolor{darkpastelblue}{rgb}{0.47, 0.62, 0.8}
\definecolor{yellowgreen}{HTML}{98CC70}
\definecolor{emerald}{rgb}{0.31, 0.78, 0.47}
\definecolor{cadet}{rgb}{0.33, 0.41, 0.47}
\definecolor{notcharcoal}{rgb}{0.1, 0.22, 0.38}

\usepackage{bbm}
\newcommand{\sett}[1]{ \{ #1 \} }

\newcommand{\enc}[1]{\mathrm{Enc}_{#1}}
\newcommand{\dec}[1]{\mathrm{Dec}_{#1}}

\newcommand{\maj}{\mathrm{Maj}}

\newcommand{\calg}{\mathcal{G}}
\newcommand{\calc}{\mathcal{C}}
\newcommand{\calw}{\mathcal{W}}
\newcommand{\calr}{\mathcal{R}}
\newcommand{\cals}{\mathcal{S}}
\newcommand{\cali}{\mathcal{I}}
\newcommand{\caly}{\mathcal{Y}}
\newcommand{\calu}{\mathcal{U}}
\newcommand{\vhat}{\hat{v}}
\newcommand{\chat}{\hat{c}}
\newcommand{\what}{\hat{w}}
\newcommand{\jhat}{\hat{j}}
\newcommand{\compr}{\texttt{compute-r}}
\newcommand{\ghat}{\hat{g}}
\newcommand{\chunk}{full chunk }  
\newcommand{\chunkn}{full chunk}
\newcommand{\pfail}{P_{\text{fail}}}
\definecolor{purplyblue}{rgb}{0.35, 0.28, 0.98}
\newcommand{\dorsa}[1]{{ \color{purplyblue} [#1 -Dorsa]} }

\newcommand{\calrm}{\mathcal{R}\mathcal{M}}
\newcommand{\ber}{\mathrm{Ber}}
\newcommand{\F}{\mathbb{F}}
\usepackage{algorithm}
\usepackage{algpseudocode}
\newcommand{\gComment}[1]{\Comment{\textcolor{black!30!blue}{#1}}}

%
%
\usepackage[utf8]{inputenc} 
\usepackage[T1]{fontenc}
\usepackage{url}
\usepackage{ifthen}
\usepackage{cite}
\usepackage[cmex10]{amsmath} 


\interdisplaylinepenalty=2500 

\hyphenation{op-tical net-works semi-conduc-tor}

\begin{document}
\title{Improved Construction of Robust Gray Codes}

\author{%
  \IEEEauthorblockN{Dorsa Fathollahi}
  \IEEEauthorblockA{Department of Electrical Engineering\\
                    Stanford University\\
                    Stanford, CA, USA\\
                    Email: dorsafth@stanford.edu}
  \and
  \IEEEauthorblockN{Mary Wootters}
  \IEEEauthorblockA{Departments of 
  Computer Science and Electrical Engineering\\
                    Stanford University\\
                    Stanford, CA, USA\\
                    Email: marykw@stanford.edu}
}

\maketitle

\begin{abstract}
A \emph{robust Gray code}, formally introduced by (Lolck and Pagh, SODA 2024), is a Gray code that additionally has the property that, given a noisy version of the encoding of an integer $j$, it is possible to reconstruct $\hat{j}$ so that $|j - \hat{j}|$ is small with high probability.  That work presented a transformation that transforms a binary code $\calc$ of rate $R$ to a robust Gray code with rate $\Omega(R)$, where the constant in the $\Omega(\cdot)$ can be at most $1/4$.  We improve upon their construction by presenting a transformation from a (linear) binary code $\calc$ to a robust Gray code with similar robustness guarantees, but with rate that can approach $R/2$.
\end{abstract}

\section{Introduction}
In \cite{LP24}, Lolck and Pagh introduce the notion of a \textbf{robust Gray code}.  Informally, a robust Gray code $\calg \subseteq \{0,1\}^d$ has an encoding map $\enc{\calg}:\{0,\ldots, N-1\} \to \{0,1\}^d$ that maps integers to bitstrings, with the following desiderata.
\begin{itemize}
    \item \textbf{$\calg$ should be a Gray code.}\footnote{The paper \cite{LP24} also gives a more general definition, where the code should have low \emph{sensitivity}, meaning that $|\enc{\calg}(j) - \enc{\calg}(j+1)|$ is small; however, both their code and our code is a Gray code, so we specialize to that case (in which the sensitivity is $1$).} That is, for any $j \in \{0,\ldots, N-2\}$, $|\enc{\calg}(j) - \enc{\calg}(j+1)| = 1$.
    \item \textbf{$\calg$ should be ``noise robust.''}  Informally, this means that we should be able to approximately recover an integer $j \in \{0, \ldots, N-1\}$ given a noisy version of $\enc{\calg}(j)$.  Slightly more formally, $\calg$ should have a decoding map $\dec{\calg}:\{0,1\}^d \to \{0, \ldots, N-1\}$, so that when $\eta \sim \ber(p)^n$, the estimate $\hat{j} = \dec{\calg}(\enc{\calg}(j) \oplus \eta)$ should be close to $j$ with high probability.
      \item \textbf{$\calg$ should have high rate.}  The rate $\frac{d}{\log N}$ of $\calg$ should be as close to $1$ as possible.
      \item \textbf{$\calg$ should have efficient algorithms.}  Both $\enc{\calg}$ and $\dec{\calg}$ should have running time polynomial (ideally, near-linear) in $d$.
\end{itemize}
Robust Gray codes have applications in differential privacy; see~\cite{LP24,ALP21,ALS22,ACLST21} for more details on the connection. It is worth mentioning that there exist non-binary codes based on the Chinese Remainder Theorem~\cite{XXW20,WX10} that have nontrivial sensitivity, but in our work, we focus on binary codes.

\vspace{.3cm}
\noindent
\textbf{Our Contributions.}  In this paper, we improve upon the construction of \cite{LP24} by giving a construction of a robust Gray code with the same robustness guarantees, but better rate.

More precisely, for $p \in (0,1/2)$, \cite{LP24} give a general recipe for turning a binary error-correcting code $\calc$ with rate $R$ into a robust Gray code $\calg$ with rate $\Omega(R)$, and with the following robustness guarantee:
\begin{equation}\label{eq:robust}
\Pr[ |j - \dec{\calg}(\enc{\calg}(j) + \eta)| \geq t \leq \exp(-\Omega(t)) + \exp(-\Omega(d)) + O(\pfail(\calc)),
\end{equation}
where the probability is over the noise vector $\eta \sim \ber(p)^d$, and $\pfail(\calc)$ is the failure probability of the code $\calc$ on the binary symmetric channel with parameter $p$.

Our main result is a similar transformation that turns a (linear) binary code $\calc$ with good performance on the binary symmetric channel into a robust Gray code $\calg$.  We obtain a similar robustness guarantee as \eqref{eq:robust} (see Theorem~\ref{thm:main} for the precise statement), but with better rate.  Concretely, if the original code $\calc$ has rate $R \in (0,1)$, the rate of the robust Gray code from \cite{LP24} is proven to be $\Omega(R)$, where the constant inside the $\Omega$ approaches $1/4$ when $\calc$ has sublinear distance; this comes from the fact that the a codeword in their final construction involves \emph{four} codewords from $\calc$.  In contrast, under the same conditions, our robust Gray code $\calg$ has rate approaching $R/2$; this is because our construction involves only \emph{two} codewords from $\calc$.  (See Observation~\ref{obs:rate} for the formal statement).  Moreover, if the encoding and decoding algorithms for $\calc$ are efficient, then so are the encoding and decoding algorithms for our construction $\calg$; concretely, the overhead on top of the encoding and decoding algorithms for $\calc$ is $O(d)$ (see Lemma~\ref{lem:running time} for the formal statement). 

As a result, when instantiated with, say, a constant-rate Reed-Muller code or a polar code (both of which have sublinear distance and good performance on the BSC($p$)~(see, e.g., \cite{RP23,A08,GX14, MHR16,WLVG23})), our construction gives efficient robust Gray codes with a rate about two times larger than than previous work, approaching $1/2$.

\vspace{.3cm}
\noindent
\textbf{Main Idea.} The idea of our transformation is quite simple, and follows the same high-level structure as \cite{LP24}.  We begin with our base code $\calc$, and use it to construct an intermediate code $\calw$ (with an appropriate ordering).  Then we add new codewords to $\calw$ to complete it to a Gray code.  For example, if $w_i, w_{i+1}$ are two consecutive codewords in $\calw$, then we will insert $\Delta(w_i, w_{i+1}) - 1$ codewords in between them, iteratively flipping bits to move from $w_i$ to $w_{i+1}$.

The main difference between our construction and that of previous work is how we build and order $\calw$.  First, we use a standard Gray code to construct an \emph{ordering} of the codewords in $\calc$.  Then, we build $\calw$ as follows.  Let $c_i$ be the $i$'th codeword in $\calc$.  Then the $i$'th codeword in $\calw$ is given by
\[ w_i = s_i \circ c_i \circ s_i \circ c_i \circ s_i,\]
where $s_i$ is a short string that is all zeros if $i$ is even and all ones otherwise, and $\circ$ denotes concatenation.  Then we form $\calg$ by interpolating as described above.

Our decoding algorithm ends up being rather complicated, but the idea is simple.   Suppose that for a codeword $g \in \calg$, we see a corrupted version $g + \eta \in \F_2^d$, where $\eta$ is a noise vector.  As described above, $g$ is made up of a prefix from $w_{i+1}$ and a suffix from $w_i$, for some $i$.  Let $h \in [d]$ be the index where $g$ ``crosses over'' from $w_{i+1}$ to $w_i$.  Notice that, as this crossover point can only be in one place, at least one of the two codewords of $\calc$ appearing in $g$ will be complete, and equal to either $c_i$ or $c_{i+1}$.  Thus, if we could identify where the crossover point $h$ was, then we could use $\calc$'s decoder to decode whichever the complete $\calc$-codeword was to identify $i$; and then use our knowledge of where $h$ is to complete the decoding.  The simple observation behind our construction is that, because the strings $s_i$ (which are either all zeros or all ones) flip with the parity of $i$, we \emph{can} tell (approximately) where $h$ was!  Indeed, these strings will be all zeros before $h$ and all ones after $h$, or vice versa.  Of course, some noise will be added, but provided that the length of the strings $s_i$ are long enough, we will still be able to approximately locate $h$ with high probability.

However, there are several challenges to implementing this simple idea.  For example, given $i$ and $h$, how do we efficiently compute $j$? (Here is where the fact that we ordered $\calc$ carefully comes in; it's not trivial because the number of codewords of $\calg$ inserted between $w_i$ and $w_{i+1}$ depends on $i$, so naively adding up the number of codewords of $\calg$ that come before $w_i$ and then adding $h$ would take exponential time.)   Or, what happens when the crossover point $h$ is very close to the end of $g_j + \eta$?  (Here, it might be the case that we mis-identify $i$; but we show that this does not matter, with high probability, because our final estimate will still be close to $j$ with high probability). 
In the rest of the paper, we show how to deal with these and other challenges.

\section{Preliminaries}
We begin by setting notation.  Throughout, we work with linear codes over $\F_2$, so all arithmetic between codewords is modulo 2.
 For $x,y \in \mathbb{F}_2^\ell$, let $\Delta(x,y)$  denote the Hamming distance between $x$ and $y$. We use $\|x\|$ to denote the Hamming weight of a vector $x \in \F_2^\ell$. 
 For a code $\calc \subseteq \F_2^n$, the minimum distance of the code is given by $D(\calc) := \max_{c \neq c' \in C} \Delta(c,c')$. 
 
 For two strings $s_1$ and $s_2$, we use $s_1 \circ s_2$ to denote the concatenation of $s_1$ and $s_2$. For a string $s \in \F_2^\ell$ and for $i \leq \ell$, we use $\text{pref}_i(s) \in \mathbb{F}_2^i$ to denote the prefix of the string $s$ ending at (and including) index $i$. Analogously, we use $\text{suff}_i(s) \in \mathbb{F}_2^{\ell-i}$ be defined as the suffix of $s$ starting at (and including) index $i$. 
 For an integer $\ell$, we use $[\ell]$ to denote the set $\{1, \ldots, \ell\}$.
 
 For $\ell \in \mathbb{Z}$, let $\maj_\ell : \mathbb{F}_2^\ell \rightarrow \F_2$ be majority function on $\ell$ bits. (In the case that $\ell$ is even and a string $y \in \F_2^\ell$ has an equal number of zeros and ones, $\maj_\ell(y)$ is defined to be a randomized function that outputs $0$ or $1$ each with probability $1/2$.)  We use $\ber(p)$ to denote the Bernoulli-$p$ distribution on $\F_2$, so if $X \sim \ber(p)$, then $X$ is $1$ with probability $p$ and $0$ with probability $1-p$.

Next we define Binary Reflected Codes, a classical Gray code (\cite{gray}; see also, e.g.,~\cite{knuth}); we will use these to define our ordering on $\calc$.  
\begin{definition}[Binary Reflected Code, \cite{gray}]
        Let $k$ be a positive integer. The \textbf{Binary Reflected Code (BRC)} is a map $\calr_k: \sett{0,\ldots, 2^k-1} \rightarrow \mathbb{F}_2^k $ defined recursively as follows.
    \begin{enumerate}\label{def:BRC}
    \item For $k = 1$,  $\calr_1(0) = 0 $ and $\calr_1(1) = 1 $.
    \item For $k > 1$, for any $i\in \sett{0,\ldots,2^k-1}$, 
    \begin{itemize}
       \item If $  i < 2^{k-1} $, then $ \calr_k(i) = 0 \circ R_{k-1}(i)$
       \item If $ i\geq 2^{k-1}$, then $\calr_k (i) = 1 \circ\overline{ R}_{k-1} ( 2^{k-1} - (i- 2^{k-1})) = 1 \circ\overline{ R}_{k-1} ( 2^{k} - i).$ 
    \end{itemize} 
\end{enumerate}
\end{definition}
It is not hard to see that for any two successive integers $i$ and $i+1$, the encoded values $\calr_k(i)$ and $\calr_k(i+1)$ differ in exactly one bit.
We will need one more building-block, the Unary code.
\begin{definition}[Unary code] The \textbf{Unary code} $\calu \subseteq \F_2^\ell$ is defined as the image of the encoding map $\enc{\calu}:\{0,\ldots, \ell\} \to \F_2^\ell$ given by
$ \enc{\calu}(v) :=  1^{v}\circ 0^{\ell -v }.$
The decoding map $\dec{\calu}: \F_2^\ell \to \{0,\ldots, \ell\}$ is given by
\[ \dec{\calu} (x)  =  \mathrm{argmin}_{v\in {\sett{0,\ldots,\ell}}} \Delta(x ,\enc{\calu}(v)).\]
\end{definition}

Next, we define the \emph{failure probability} of a code $\calc$.
\begin{definition}\label{def:prob-fail} 
Let $\calc \subseteq \mathbb{F}_2^n$ be a code with message length $k$ and encoding and decoding maps $\dec{\calc}$ and $\enc{\calc}$ respectively.
The probability of failure of $\calc$ is
    \begin{equation*}
        \pfail( \calc) = \max_{v\in \mathbb{F}_2^k } P [ \dec{\calc} ( \enc{\calc} (v) + \eta_p) \neq v ], 
    \end{equation*}
    where the probability is over a noise vector $\eta_p \in \F_2^n$ with $\eta_p \sim \mathrm{Ber}(p)^n$.
\end{definition}

\section{Construction}\label{sec:construction}

We recall the high-level overview of our construction from the introduction: 
To construct $\calg$ we will start with a base code $\calc$ where $ \calc \subseteq \F_2^n$, which we will order in a particular way (Definition~\ref{def:order-c}).  Then we construct an intermediate code $\calw = \{w_0, \ldots, w_{2^k-1}\} \subseteq  \mathbb{F}_2^d$  \ by transforming the codewords of $\calc$ (Definition~\ref{def:calw}); the codewords of $\calw$ inherit an order of $\calc$. 
Finally, we create final code $\calg \subseteq \F_2^d$ by adding new codewords that ``interpolate'' between the codewords of $\calw$ so that it satisfies the Gray code condition (Definition~\ref{def:calg}).
We discuss each of these steps in the subsequent subsections. 

\subsection{Base Code $\calc$}\label{sec:basecode}

Given a base code $\calc \subset \F_2^n$, we define an ordering on the elements of $\calc$ as follows.

\begin{definition}\label{def:order-c}[Ordering on $\calc$]
Let $\calc \subseteq \F_2^n$ be a linear code  with block length $n$ and dimension $k$.  Let $A_\calc \in \mathbb{F}_2^{k \times n}$ be a generator matrix for $\calc$, and let $a_i$ denote the $i$-th row of $A_\calc$. 
Given $i \in \sett{0,\ldots, 2^k-1}$, define $z_i$ to be the unique integer so that $\calr_k(i)[z_i] \neq \calr_k(i+1)[z_i]$.\footnote{As noted after  Definition~\ref{def:BRC}, $\calr_k(i)$ and $\calr_k (i+1)$ differ in only one bit, so $z_i$ is well-defined.} Let $c_0 = 0^n$. Then, for all $i \in \sett{1,\ldots, 2^k-1}$, the $i$-th codeword of $\mathcal{C}$ is defined by
\[
c_i = c_{i-1} + a_{z_i}.
\]
\end{definition}

Our next lemma establishes that indeed this ordering hits all of the codewords.
\begin{lemma}
Let $\calc$ be a linear code, and consider the ordering defined in Definition~\ref{def:order-c}.
 For every $c\in \calc$, there is a unique index $i\in \sett{0,\ldots,2^k-1}$ such that  $c_i = c $. 
\end{lemma}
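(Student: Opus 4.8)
The plan is to show that the map $i \mapsto c_i$ from $\{0,\ldots,2^k-1\}$ to $\calc$ is a bijection; since both sets have size $2^k$ (as $\calc$ is linear of dimension $k$), it suffices to prove surjectivity, or equivalently injectivity. The cleanest route is to obtain a closed form for $c_i$ in terms of the Binary Reflected Code. Concretely, I claim that
\[
c_i = \sum_{\ell=1}^{k} \calr_k(i)[\ell] \cdot a_\ell,
\]
i.e.\ $c_i = \calr_k(i) \cdot A_\calc$, where we view $\calr_k(i) \in \F_2^k$ as a row vector. Once this identity is established, the result follows immediately: the BRC map $\calr_k$ is a bijection from $\{0,\ldots,2^k-1\}$ onto $\F_2^k$ (it is a Gray code listing all $k$-bit strings), and the map $x \mapsto x \cdot A_\calc$ is a bijection from $\F_2^k$ onto $\calc$ because $A_\calc$ is a generator matrix of a dimension-$k$ code (its rows are linearly independent). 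The composition of two bijections is a bijection, giving existence and uniqueness of the index $i$ with $c_i = c$.

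The main work is therefore proving the closed form, which I would do by induction on $i$. The base case is $i=0$: $\calr_k(0) = 0^k$, so $\calr_k(0)\cdot A_\calc = 0^n = c_0$, consistent with the definition. For the inductive step, suppose $c_{i-1} = \calr_k(i-1)\cdot A_\calc$. By Definition~\ref{def:order-c}, $c_i = c_{i-1} + a_{z_i}$ where $z_i$ is the unique coordinate in which $\calr_k(i-1)$ and $\calr_k(i)$ differ. Then $\calr_k(i) = \calr_k(i-1) + e_{z_i}$, where $e_{z_i}$ is the $z_i$-th standard basis vector in $\F_2^k$, and hence $\calr_k(i)\cdot A_\calc = \calr_k(i-1)\cdot A_\calc + e_{z_i}\cdot A_\calc = c_{i-1} + a_{z_i} = c_i$, as desired. (A small indexing remark: the definition uses $z_i$ as the bit where $\calr_k(i)$ and $\calr_k(i+1)$ differ, so in the recursion $c_i = c_{i-1}+a_{z_i}$ one should read the differing index between consecutive encodings; I would just state the consistent convention up front and note that the single-bit-difference property of the BRC, remarked after Definition~\ref{def:BRC}, is exactly what makes $z_i$ well-defined.)

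I expect the only real subtlety — not an obstacle, but a point requiring care — is getting the index bookkeeping exactly right between ``the $i$-th differing bit'' as used in Definition~\ref{def:order-c} and the increment in the recursion; one must make sure the telescoping of $e_{z_1} + e_{z_2} + \cdots$ actually reconstructs $\calr_k(i)$ rather than $\calr_k(i)$ shifted by one index. This is handled cleanly by phrasing the induction as ``$c_i = \calr_k(i)\cdot A_\calc$ for all $i$'' and checking that each single-bit increment of the BRC encoding is matched by adding the corresponding row of $A_\calc$. Everything else — that $\calr_k$ enumerates $\F_2^k$ bijectively, and that $x \mapsto xA_\calc$ is injective on $\F_2^k$ since $\mathrm{rank}(A_\calc) = k$ — is standard and can be cited or stated in one line. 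So the proof is: establish $c_i = \calr_k(i) \cdot A_\calc$ by induction, then conclude bijectivity by composing the two known bijections.
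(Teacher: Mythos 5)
Your proof is correct and follows the same route as the paper: establish the closed form $c_i = \calr_k(i)\cdot A_\calc$ and then conclude by composing the bijection $\calr_k$ with the injection $x \mapsto x A_\calc$. The paper simply asserts the closed form ``by construction,'' whereas you spell out the telescoping induction and flag the indexing bookkeeping; the underlying argument is identical.
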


\begin{proof}
    Observe that, by construction, we have
    \[ c_i = \calr_k(i)^T A_\calc.\]
    Since $\calr_k: \{0,\ldots, 2^k-1\} \to \F_2^k$ is a bijection and $A_\calc$ is full rank, this implies that each codeword in $\calc$ is uniquely represented as some $c_i$ for $i \in \{0, \ldots, 2^k-1\}.$
\end{proof}

\subsection{Intermediate Code $\calw$}\label{sec:W}
Next, we describe how to generate our intermediate code $\calw$.
\begin{definition} 
\label{def:calw}
    Let $\calc \subseteq \mathbb{F}_2^n$ be a linear code of dimension $k$. Let $(c_0, \ldots, c_{2^k-1})$ denote the ordering of codewords in $\calc$ as per Definition~\ref{def:order-c}. Let $d= 2 n + 3 D(\calc)$. The intermediate code $\calw$, along with its ordering, is defined as follows. For each $i \in \{0, \ldots, 2^k-1\}$, define $w_i \in \mathbb{F}_2^{d}$ by the equation
    \begin{equation}
        w_i =\begin{cases}
             0^{D(\calc)} \circ c_i\circ   0^{D(\calc)} \circ c_i \circ 0^{D(\calc)}   & \text{if } i \text{ is even} \\
             1^{D(\calc)} \circ c_i \circ 1^{D(\calc)} \circ c_i  \circ 1^{D(\calc)} & \text{if } i \text{ is odd}
        \end{cases}
    \end{equation}
Then, $\calw$ is the subset of $\mathbb{F}_2^{d}$ defined by $\calw = \{ w_i \,:\, i \in \{1, \ldots, 2^k-1\} \}$, where the code $\calw$ is ordered as $(w_0, \ldots, w_{2^k-1})$. 
\end{definition}

\subsection{Final Code $\calg$}

To create our robust Gray code $\calg$, given any two consecutive codewords in $\calw$, we inject extra codewords between them to create $\calg$, as follows. 

\begin{definition}[Definition of our robust Gray code $\calg$; and the parameters $r_i, h_{i,j}$]
\label{def:calg}
Let $\calw \subseteq \F_2^d$ be a code defined as in Definition \ref{def:calw}. For each $i \in [2^k]$,  define $r_i = \sum_{\ell=1}^{i} \Delta(w_{\ell-1},w_{\ell})$, and let $N = r_{2^k}$. For $i \in [2^k]$ and $1\leq j < \Delta (w_i, w_{i+1})$, let $h_{i,j} \in \{0,\ldots, d-1\}$ be the $j$-th index where codewords $w_i$ and $w_{i+1}$ differ. 

Define the zero'th codeword of $\calg$ as $g_0 = w_0$.  
Fix $j \in \{1, \ldots, N-1\}$.  If  $j = r_i$ for some $i$, we define $g_j \in \F_2^d$ by $g_j = w_i$.
On the other hand, if 
$j \in (r_i, r_{i+1})$ for some $i$, then we define $g_j \in \mathbb{F}_2^{d}$ as
\begin{equation}
g_j = \text{pref}_{h_{i,j-r_i}}(w_{i+1}) \circ \text{suff}_{h_{i,j-r_i}+1}(w_i).
\end{equation}

Finally, define $\calg \subseteq \F_2^d$ by $\calg = \{g_i \,:\, i \in \{0, \ldots, N-1\}\},$ along with the encoding map $\enc{\calg}: \{0,\ldots, N-1\} \to \F_2^d$ given by $\enc{\calg}(i) = g_i$.

We will also define $h_i =(h_{i,1}, h_{i,2}, \ldots , h_{i,\Delta (w_i, w_{i+1})-1} ) \in \{0,\ldots,d-1\}^{\Delta(w_i, w_{i+1})-1}$ to be the vector of all indices in which $w_i$ and $w_{i+1}$ differ, in order, except for the last one.\footnote{The reason we don't include the last one is because once the last differing bit has been flipped, $g_j$ will lie in $[w_{i+1}, w_{i+2})$, not $[w_i, w_{i+1})$.}
\end{definition} 

It will frequently be useful to be able to locate $j \in [N]$ within a block $[r_i, r_{i+1})$. To that end, we introduce the following notation.
\begin{definition}\label{def:bar}
Let $j \in \{0, \ldots, N-1\}$.  Let $i \in \{0, \ldots, 2^{k}-1\}$ be such that $j \in [r_i, r_{i+1})$.  Then we will use the notation $\bar{j}$ to denote $j - r_i$.  That is, $\bar{j}$ is the index of $j$ in the block $[r_i, r_{i+1})$.
\end{definition}
Note that, in this notation, when $j \in [r_i, r_{i+1})$, the last bit that has been flipped to arrive at $g_j$ in the ordering of $\calg$ (that is, the ``crossover point'' alluded to in the introduction)  is $h_{i,\bar{j}}.$
We make a few useful observations about Definition~\ref{def:calg}.  The first two follow immediately from the definition.
\begin{observation}\label{obs:graycode}
$\calg$ is a Gray code.  That is,
    For any $j \in \{0, \ldots, N-1\}$, we have that $ \Delta( g_j , g_{j+1} ) = 1$.  
\end{observation}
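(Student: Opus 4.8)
The plan is to fix an arbitrary $j\in\{0,\ldots,N-2\}$ and verify $\Delta(g_j,g_{j+1})=1$ directly, organizing the argument around a single uniform description of the codewords of $\calg$ lying inside one ``block'' $[r_i,r_{i+1}]$.

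First I would record the bookkeeping. The numbers $r_0<r_1<\cdots<r_{2^k}=N$ are strictly increasing: for each $i$ the codewords $w_i,w_{i+1}$ are distinct (their length-$D(\calc)$ constant blocks are all-zero for one and all-one for the other, since exactly one of $i,i+1$ is even, and $D(\calc)\ge 1$), so $\Delta(w_i,w_{i+1})\ge 1$; hence the blocks $[r_i,r_{i+1})$ partition $\{0,\ldots,N-1\}$, and every consecutive pair $g_j,g_{j+1}$ lies among $\{g_{r_i},g_{r_i+1},\ldots,g_{r_{i+1}}\}$ for some $i$. Fix such an $i$, set $m_i=\Delta(w_i,w_{i+1})$, and let $0=p_0<p_1<\cdots<p_{m_i}$, where $p_1<\cdots<p_{m_i}$ are, in increasing order, the coordinates on which $w_i$ and $w_{i+1}$ differ (so $p_t=h_{i,t}$ for $1\le t\le m_i-1$). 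Unwinding Definition~\ref{def:calg}, I claim that for every $t\in\{0,1,\ldots,m_i\}$ the codeword $g_{r_i+t}$ is exactly the string that agrees with $w_{i+1}$ on coordinates $\le p_t$ and with $w_i$ on coordinates $>p_t$: for $1\le t\le m_i-1$ this is the displayed formula $g_j=\text{pref}_{h_{i,\bar j}}(w_{i+1})\circ\text{suff}_{h_{i,\bar j}+1}(w_i)$ with $\bar j=t$; for $t=0$ it reads $g_{r_i}=w_i$; and for $t=m_i$ it reads $g_{r_{i+1}}=w_{i+1}$, which is correct because $w_i$ and $w_{i+1}$ agree on all coordinates past $p_{m_i}$.

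Granting this description, the observation falls out by a case-free computation. For $0\le t\le m_i-1$, the strings $g_{r_i+t}$ and $g_{r_i+t+1}$ agree on all coordinates $\le p_t$ (both equal $w_{i+1}$ there) and on all coordinates $>p_{t+1}$ (both equal $w_i$ there); on a coordinate $p$ with $p_t<p<p_{t+1}$ one has $g_{r_i+t}[p]=w_i[p]$ and $g_{r_i+t+1}[p]=w_{i+1}[p]$, but $p$ is not a coordinate on which $w_i,w_{i+1}$ differ, so these are equal; and on the single coordinate $p_{t+1}$ the two strings equal $w_i[p_{t+1}]$ and $w_{i+1}[p_{t+1}]$ respectively, which differ by the choice of $p_{t+1}$. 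Hence $\Delta(g_{r_i+t},g_{r_i+t+1})=1$. Since every consecutive pair $g_j,g_{j+1}$ is of this form, we are done. The only point needing attention is checking that the two endpoint cases $t=0$ and $t=m_i$ genuinely fit the same template as the generic case $1\le t\le m_i-1$ of Definition~\ref{def:calg}; this is immediate, and I do not expect any real obstacle.
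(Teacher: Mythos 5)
The paper declares this observation to ``follow immediately from the definition'' and supplies no proof, so there is nothing to compare against; your job was simply to spell out the reasoning, and you do so correctly. The approach---describe all $m_i+1$ codewords $g_{r_i},\ldots,g_{r_{i+1}}$ inside one block uniformly as prefix-of-$w_{i+1}$/suffix-of-$w_i$ splits, then observe that passing from one split point to the next flips exactly the single coordinate $p_{t+1}$ on which $w_i$ and $w_{i+1}$ disagree---is the natural (and essentially the only) argument, and your case analysis on coordinates $\le p_t$, in $(p_t,p_{t+1})$, equal to $p_{t+1}$, and $>p_{t+1}$ is complete.

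One small bookkeeping slip worth fixing: you declare $0 = p_0 < p_1 < \cdots < p_{m_i}$, but the paper's convention has $h_{i,j} \in \{0,\ldots,d-1\}$, i.e.\ coordinates are $0$-indexed, and the very first coordinate on which $w_i$ and $w_{i+1}$ disagree is coordinate $0$ (the opening bit of the $s_1$ block, which is $0^{D(\calc)}$ for one of the two and $1^{D(\calc)}$ for the other). So $p_1 = h_{i,1} = 0$ and the inequality $p_0 < p_1$ is violated as written; moreover for $t=0$ your uniform description ``agrees with $w_{i+1}$ on coordinates $\le p_0 = 0$'' would force $g_{r_i}[0] = w_{i+1}[0] \ne w_i[0]$, contradicting $g_{r_i}=w_i$. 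This is purely a sentinel issue: take $p_0 = -1$ (so that ``coordinates $\le p_0$'' is vacuous), or simply handle $t=0$ by citing the definition $g_{r_i}=w_i$ directly rather than folding it into the template. Everything else, including the $t=m_i$ endpoint, is fine: there the description agrees with $w_{i+1}$ on everything up through the last differing coordinate $p_{m_i}$ and with $w_i$ past it, and since $w_i$ and $w_{i+1}$ coincide past $p_{m_i}$ this is indeed $w_{i+1} = g_{r_{i+1}}$.
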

\begin{observation}\label{obs:rate}
Suppose that $\calc \subseteq \F_2^n$ has rate $R = \log_2|\calc|/n$ and distance $o(n)$.  Then the code $\calg$ constructed as in Definition~\ref{def:calg} has rate that approaches $R/2$ as $N \to \infty$.
\end{observation}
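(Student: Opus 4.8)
The plan is to compute the rate of $\calg$, which equals $(\log_2 N)/d$, directly in terms of the block length $n$, dimension $k = Rn$, and minimum distance $D(\calc)$ of the base code, and to check that both the padding length and the number of ``interpolated'' codewords contribute only lower-order terms, so that $(\log_2 N)/d \to R/2$.

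First I would record that, by Definition~\ref{def:calw}, $d = 2n + 3D(\calc)$, so the hypothesis $D(\calc) = o(n)$ immediately gives $d = 2n\,(1 + o(1))$ as $n \to \infty$; also $|\calc| = 2^k$ with $k = Rn$. Next I would sandwich $N$. From Definition~\ref{def:calg}, $N = r_{2^k} = \sum_{\ell=1}^{2^k} \Delta(w_{\ell-1}, w_\ell)$, and each summand lies in $[1,d]$: the upper bound is trivial since $w_{\ell-1}, w_\ell \in \F_2^d$, and the lower bound holds because consecutive $w$'s have opposite parity and hence disagree on all $3D(\calc) \ge 1$ coordinates lying in the all-zero/all-one padding blocks. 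This yields $2^k \le N \le 2^k d$, so $k \le \log_2 N \le k + \log_2 d$. Since $d = O(n)$ we have $\log_2 d = O(\log n) = o(k)$ (using that $R$ is a fixed positive constant), whence $\log_2 N = k\,(1 + o(1)) = Rn\,(1 + o(1))$.

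Dividing the two estimates then gives $(\log_2 N)/d = \frac{Rn\,(1+o(1))}{2n\,(1+o(1))} = \frac{R}{2}\,(1 + o(1))$, which tends to $R/2$ as $n \to \infty$. To match the statement's phrasing ``as $N \to \infty$'', I would finish by noting that $2^{Rn} \le N \le 2^{Rn}\cdot O(n)$, so ``$N \to \infty$'' and ``$n \to \infty$'' are equivalent, and the $o(1)$ terms (which are functions of $n$) therefore vanish precisely in the limit $N \to \infty$.

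There is essentially no obstacle here; the only step worth stating carefully is the upper bound $N \le 2^k d$, which captures the fact that inserting Gray-code interpolation codewords between the $2^k$ anchor words $w_0, \ldots, w_{2^k-1}$ can enlarge the codebook by at most a factor of $d = \mathrm{poly}(n)$, and hence changes $\log_2 N$ by only an additive $O(\log n)$ — negligible next to $k = Rn$.
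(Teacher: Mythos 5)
Your proof is correct. The paper does not write out a proof for this observation (it says the first two observations ``follow immediately from the definition''), and your argument is exactly the natural way to make it precise: bound $N$ between $2^k$ and $2^k d$ using the fact that $\Delta(w_{\ell-1},w_\ell) \in [1,d]$ for each of the $2^k$ consecutive pairs, conclude $\log_2 N = k(1+o(1)) = Rn(1+o(1))$, and divide by $d = 2n + 3D(\calc) = 2n(1+o(1))$. All the steps check out, including the correct observation that consecutive $w$'s disagree on all $3D(\calc)$ padding coordinates (giving the lower bound on each summand) and that $\log_2 d = O(\log n) = o(k)$ because $R$ is a fixed positive constant. No gaps.
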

\begin{observation}\label{obs:unary}
    Let $g_j \in \calg$, and suppose that $j \in (r_i, r_{i+1})$ for some $i \in \{0, \ldots, 2^k -1\}$.  Then
    \[ (g_j + w_i)[h_i] = \enc{\calu}(\bar{j}),\]
    where $\calu \subset \F_2^{\Delta(w_i, w_{i+1})}$ is the unary code of length $\Delta(w_i, w_{i+1})$.  Above, $(g_j + w_i)[h_i]$ denotes the restriction of the vector $g_i + w_i \in \F_2^d$ to the indices that appear in the vector $h_i$.
\end{observation}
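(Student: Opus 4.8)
The plan is to unwind the three relevant definitions --- that of $g_j$ (Definition~\ref{def:calg}), that of $h_i$, and that of $\enc{\calu}$ --- and then compare the two sides coordinate by coordinate. First I would record that since $j \in (r_i, r_{i+1})$ we have $\bar{j} = j - r_i \in \{1, \ldots, \Delta(w_i, w_{i+1}) - 1\}$, and that by definition
\[ g_j = \text{pref}_{h_{i,\bar{j}}}(w_{i+1}) \circ \text{suff}_{h_{i,\bar{j}}+1}(w_i), \]
so that $g_j$ agrees with $w_{i+1}$ on coordinates $1, \ldots, h_{i,\bar{j}}$ and agrees with $w_i$ on coordinates $h_{i,\bar{j}}+1, \ldots, d$. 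The vector $h_i = (h_{i,1}, \ldots, h_{i,\Delta(w_i,w_{i+1})-1})$ lists the first $\Delta(w_i,w_{i+1})-1$ positions at which $w_i$ and $w_{i+1}$ differ, in strictly increasing order; in particular it has exactly $\Delta(w_i,w_{i+1})-1$ entries, which matches the number of coordinates of $\enc{\calu}(\bar{j})$ for the unary code of the relevant length, so the length bookkeeping is consistent.

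Next I would fix an index $m \in \{1, \ldots, \Delta(w_i, w_{i+1})-1\}$ and evaluate $(g_j + w_i)[h_{i,m}]$ by splitting into two cases. If $m \le \bar{j}$, then by monotonicity $h_{i,m} \le h_{i,\bar{j}}$, so coordinate $h_{i,m}$ falls in the prefix block and $g_j[h_{i,m}] = w_{i+1}[h_{i,m}]$; since $h_{i,m}$ is by construction a coordinate on which $w_i$ and $w_{i+1}$ differ, $(g_j + w_i)[h_{i,m}] = w_{i+1}[h_{i,m}] + w_i[h_{i,m}] = 1$. If instead $m > \bar{j}$, then by strict monotonicity $h_{i,m} \ge h_{i,\bar{j}+1} > h_{i,\bar{j}}$, so coordinate $h_{i,m}$ falls in the suffix block, giving $g_j[h_{i,m}] = w_i[h_{i,m}]$ and hence $(g_j + w_i)[h_{i,m}] = 0$. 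Combining the two cases shows that $(g_j + w_i)[h_i] = 1^{\bar{j}} \circ 0^{\Delta(w_i,w_{i+1}) - 1 - \bar{j}} = \enc{\calu}(\bar{j})$, as claimed.

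I do not expect any genuine obstacle here: the argument is a direct consequence of the definition of $g_j$ as a prefix of $w_{i+1}$ glued to a suffix of $w_i$. The only point requiring a little care is the indexing --- making sure that "the $m$-th differing coordinate" lies at or before "the $\bar{j}$-th differing coordinate" exactly when $m \le \bar{j}$, which is immediate from the strict increasingness of $m \mapsto h_{i,m}$ --- and the off-by-one between "all differing coordinates" and "all differing coordinates except the last," which is why $h_i$ has length $\Delta(w_i,w_{i+1})-1$ and why $\bar j$ never reaches $\Delta(w_i,w_{i+1})$.
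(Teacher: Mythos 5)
Your proof is correct and follows essentially the same approach as the paper's: both argue that restricted to the differing positions $h_i$, the prefix-of-$w_{i+1}$/suffix-of-$w_i$ structure of $g_j$ means the first $\bar{j}$ such positions agree with $w_{i+1}$ (giving a $1$ after adding $w_i$) and the rest agree with $w_i$ (giving a $0$). You also rightly flag, and implicitly resolve, the off-by-one between $|h_i| = \Delta(w_i,w_{i+1})-1$ and the stated unary-code length $\Delta(w_i,w_{i+1})$, which the paper's one-sentence proof glosses over.
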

\begin{proof}
    By definition, $h_i$ contains the indices on which $w_i$ and $w_{i+1}$ differ, and also by definition, by the time we have reached $g_j$, the first $j - r_i = \bar{j}$ of these indices have been flipped from agreeing with $w_i$ to agreeing with $w_{i+1}$.  Thus, if we add $g_j$ and $w_i$ (mod 2), we will get $1$ on the first $j-r_i$ indices and $0$ on the on the rest.
\end{proof}

Before we move on, we show Definition~\ref{def:calg} actually defines an injective map.
\begin{lemma}\label{lem:injective}
    Let $\calg$ be a code with encoding map $\enc{\calg}$ be as defined in Definition~\ref{def:calg}. Then $\enc{\calg}$ is injective.  
\end{lemma}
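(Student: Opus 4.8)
The plan is to show that $\enc{\calg}$ is injective by arguing that distinct inputs $j \neq j'$ produce distinct codewords $g_j \neq g_{j'}$. The key structural fact is that the map is "piecewise" over the blocks $[r_i, r_{i+1})$: within a block, the codewords are obtained from $w_i$ by flipping a strictly increasing set of coordinates, so they are distinct from each other and from the endpoints; and across blocks, the "spine strings" $s_i$ (all zeros or all ones) let us recover the parity of $i$, which combined with the embedded $\calc$-codeword pins down $i$.

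First I would handle the within-a-block case. Suppose $j, j' \in [r_i, r_{i+1})$ with $j < j'$. Using Observation~\ref{obs:unary} (or directly the definition), $g_j + w_i$ restricted to the coordinates $h_i$ equals $\enc{\calu}(\bar j)$, and similarly for $j'$ we get $\enc{\calu}(\bar{j'})$. Since $\bar j = j - r_i \neq j' - r_i = \bar{j'}$ and $\enc{\calu}$ is injective on $\{0, \ldots, \Delta(w_i, w_{i+1})\}$, we conclude $g_j \neq g_{j'}$. (The boundary values $j = r_i$ and $j = r_{i+1}$, where $g_j = w_i$ resp. $g_j = w_{i+1}$, fit the same pattern with $\bar j = 0$.)

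Next, the across-blocks case: suppose $j \in [r_i, r_{i+1})$ and $j' \in [r_{i'}, r_{i'+1})$ with $i < i'$, and suppose for contradiction $g_j = g_{j'}$. Here I would use the spine: by Definition~\ref{def:calw}, the first $D(\calc)$ coordinates of $w_i$ are $s_i \in \{0^{D(\calc)}, 1^{D(\calc)}\}$ according to the parity of $i$, and likewise for the middle and trailing blocks of length $D(\calc)$. The codeword $g_j$ is a prefix of $w_{i+1}$ concatenated with a suffix of $w_i$; I would argue that $g_j$ nonetheless determines both $c_i$ (or $c_{i+1}$) and the parity of $i$. Concretely: the crossover point $h_{i,\bar j}$ lies within one of the five segments; whichever of the two embedded copies of the $\calc$-codeword is not straddled by the crossover appears intact in $g_j$ — call it $c_{i}$ or $c_{i+1}$ — and decoding it via the (bijective, by the Lemma after Definition~\ref{def:order-c}) correspondence $c \leftrightarrow \calr_k^{-1}$ recovers $i$ or $i+1$. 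To disambiguate, note the spine segments on either side of the crossover have opposite parity patterns, which tells us whether we recovered $i$ or $i+1$. Applying the same analysis to $g_{j'}$ recovers $i'$ or $i'+1$; since $g_j = g_{j'}$ these must agree, forcing $i' \in \{i-1, i, i+1\}$, and a short case check (using $i < i'$, so $i' = i+1$) together with matching up the recovered parities and $\calc$-codewords rules this out or reduces to the within-block case. Once $i$ is determined, the within-block argument above gives $\bar j = \bar{j'}$, hence $j = j'$.

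The main obstacle is the bookkeeping in the across-blocks case when the crossover point $h_{i, \bar j}$ is near the very end of $g_j$ (close to index $d$): then almost all of $g_j$ is a prefix of $w_{i+1}$, so the "intact" embedded codeword is $c_{i+1}$ and the relevant spine parity is that of $i+1$, not $i$. One must be careful that in this regime $g_j$ cannot coincidentally equal some $g_{j'}$ from block $i' = i+1$ — but this is exactly the within-block analysis applied to block $i+1$, since a late-crossover $g_j$ in block $i$ is literally $\text{pref}_{h}(w_{i+1}) \circ \text{suff}_{h+1}(w_i)$ for $h$ near $d$, which differs from any genuine codeword of block $i+1$ (those are prefixes of $w_{i+2}$ glued to suffixes of $w_{i+1}$). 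Keeping the segment-by-segment case analysis organized — there are five segments and the crossover can fall in any of them — is the fiddly part, but each case is a direct comparison of fixed bit-patterns, so no real difficulty beyond careful enumeration.
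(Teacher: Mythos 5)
Your within-block argument is clean and actually tidier than the paper's: invoking Observation~\ref{obs:unary} to identify $(g_j + w_i)[h_i]$ with $\enc{\calu}(\bar j)$ and using injectivity of $\enc{\calu}$ handles Case~1 of the paper's proof in one stroke. However, the across-block argument has genuine gaps.

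First, the claim that recovering ``$i$ or $i+1$'' from $g_j$ and ``$i'$ or $i'+1$'' from $g_{j'}$ forces $i' \in \{i-1, i, i+1\}$ is not justified. You argue as if there is a canonical procedure that, applied to the common bit string $y = g_j = g_{j'}$, outputs a single index $\hat\iota$; but which of $\tilde c_1(y)$, $\tilde c_2(y)$ is the intact codeword depends on $j$ (respectively $j'$), not just on $y$. The troublesome subcase is exactly the one the paper isolates: the crossover for $j$ lies in the $\tilde c_2$ location (so $\tilde c_1(y) = c_{i+1}$ is the intact one) while the crossover for $j'$ lies in the $\tilde c_1$ location (so $\tilde c_2(y) = c_{i'}$ is the intact one). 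Then neither $\tilde c_1(y) \in \{c_{i'}, c_{i'+1}\}$ nor $\tilde c_2(y) \in \{c_i, c_{i+1}\}$ is forced, and you learn nothing that constrains $i'$ relative to $i$. The paper closes this subcase with Lemma~\ref{lem:repetition-differ}: from $j$'s perspective $s_1(y) = s_2(y)$ and $s_2(y) \neq s_3(y)$, while from $j'$'s perspective $s_1(y) \neq s_2(y)$ and $s_2(y) = s_3(y)$, which is an immediate contradiction. Your phrase ``matching up the recovered parities'' gestures at something like this, but as written it does not supply the contradiction, and your disambiguation step does not work at all when the crossover lies in $s_1$ or $s_3$ (the $(a,a,a)$ spine pattern), which is precisely where it would be needed.

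Second, the adjacent-block case $i' = i+1$ does not ``reduce to the within-block case'' and is not obviously ``ruled out.'' You observe that $g_j = \mathrm{pref}_h(w_{i+1}) \circ \mathrm{suff}_{h+1}(w_i)$ while $g_{j'} = \mathrm{pref}_{h'}(w_{i+2}) \circ \mathrm{suff}_{h'+1}(w_{i+1})$, and assert these must differ; but both can be extremely close to $w_{i+1}$ (late $h$, early $h'$), and you give no coordinate on which they must disagree. The paper's argument is simply to look at coordinate $d-1$: since $h_i$ excludes the last differing index, $g_j[d-1] = w_i[d-1]$ and $g_{j'}[d-1] = w_{i+1}[d-1]$, and these differ because $i$ and $i+1$ have opposite parity. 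Without some such concrete coordinate, the adjacent-block case remains open in your write-up.
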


\begin{proof} 
   Assume, for the sake of contradiction, that there are two distinct $j,j^\prime \in \sett{0, \ldots , N-1}$ such that $g_j= g_{j^\prime}$.  Without loss of generality assume that $ j^\prime > j$.  There are three scenarios possible. 
    \begin{enumerate}
        \item  \textbf{Case 1:} Both $j$ and $j^\prime$ are in the interval $[r_i, r_{i+1})$.  Then we claim that $g_j[h_{i,\bar{j'}}] \neq g_{j^\prime}[h_{i,\bar{j'}}]$. The reason is that  $g_j[h_{i,\bar{j'}}] = w_i[h_{i,\bar{j'}}]$ and $g_{j^\prime} [h_{i,\bar{j'}}] = w_{i+1} [h_{i,\bar{j'}}]$; but by definition of $h_{i,\bar{j'}}$,  $w_i[h_{i,\bar{j'}}] \neq w_{i+1}[h_{i,\bar{j'}}]$. Thus, $g_j \neq g_{j^\prime}$.
        
        \item \textbf{Case 2: }$j\in [r_{i-1}, r_i)$ and $j^\prime \in [r_{i}, r_{i+1})$.
            Then $g_j$ is an interpolation of $ w_{i-1}$ and  $w_i$, and $g_{j^\prime}$ is an interpolation of $ w_{i} $ and $w_{i+1}$. Notice that $g_j[d-1] = w_{i-1} [d-1]$ and  $g_{j^\prime} [d-1] = w_{i}[d-1]$,
            as the last index of the codewords $g_j$ and $g_{j'}$ has not been flipped yet.  However, as $i$ and $i-1$ have different parity, $w_{i} [d-1] \neq w_{i-1} [d-1]$, which implies $g_j \neq g_{j'}$.
        
        \item \textbf{Case 3:} $j\in [r_{i}, r_{i+1})$ and $j^\prime \in [r_{i^\prime}, r_{i^\prime+1})$  where $ |i-i^\prime| > 1$. In this scenario, $ \sett{c_i, c_{i+1}} \cap \sett{c_{i^\prime} , c_{i^\prime + 1 } } = \emptyset $. 
        Suppose that neither $h_{i,\bar{j}}$ nor  $h_{i^\prime, \bar{j'}}$ are in $[D(\calc), D(\calc) + n)$.
       Then $\tilde{c}_1(g_j) \in \sett{c_i, c_{i+1}}$ and $\tilde{c_1}(g_{j^\prime}) \in \sett{c_{i^\prime}, c_{i^\prime + 1 } } $ leading to $ \tilde{c}_1(g_j) \neq \tilde{c_1}(g_{j^\prime})$ hence $g_j\neq g_{j^\prime}$. The same holds if neither are in $[2D(\calc) + n, 2D(\calc) + 2n)$, repeating the argument with $\tilde{c}_2$.
       The final sub-case is that $h_{i,\bar{j}}  \in [D(\calc), D(\calc) + n)$ and  $h_{i^\prime, \bar{j'}} \in [2D(\calc) + n, 2D(\calc) + 2n)$ or vice versa.  If this occurs (suppose without loss of generality, it is the first one, not the ``vice versa'' case), then according to Lemma~\ref{lem:repetition-differ}, $ s_1(g_j) \neq s_2(g_j)$, however for $g_{j^\prime}$, $s_1(g_{j^\prime} ) =  s_1(w_{i^\prime + 1 })  = s_2( w_{i^\prime + 1 } ) = s_2(g_{j^\prime})  $.  This implies that either $ s_1(g_j)\neq s_1(g_{j^\prime})$ or $s_2(g_j)\neq s_2(g_{j^\prime})$, which implies that $g_j \neq g_{j'}$, as desired.
    \end{enumerate}

\end{proof}

\section{Decoding Algorithm and Analysis}\label{sec:decalg}

In this section, we define our decoding algorithm and analyze it.
We begin with some notation for the different parts of the codewords $g_j \in \calg$.
For a string $x$, we use $ x[i:i']$ to denote the substring $(x_i, x_{i+1}, \ldots, x_{i'-1})$.  With this notation, for any $x \in \F_2^d$, define the following substrings: 
\begin{itemize}
    \item $s_1 (x) = x [ 0:D(\calc)] $
    \item $ \tilde{c}_1 (x) =x [D(\calc): D(\calc) + n ]$
    \item $ s_2 (x) =x [ D(\calc) + n : 2D(\calc) + n ] $
    \item $ \tilde{c}_2(x) = x[2 D(\calc) + n , 2 D(\calc) + 2 n ]  $  \item  $ s_3 (x) = x [2 D(\calc) + 2 n :3 D(\calc) + 2 n] $. 
\end{itemize}
Notice that if $x \in \calg$, then $ \tilde{c}_1$ and $\tilde{c}_2$ are in locations corresponding to the codewords of $\calc$ that appear in codewords of $\calw$, while $s_1, s_2$, and $s_3$ are in locations corresponding to the $0^{D(\calc)}$ and $1^{D(\calc)}$ strings.

Before we formally state the algorithm (Algorithm~\ref{alg:Dec} below), we prove a few lemmas to motivate its structure.

Our first lemma formalizes the intuition in the introduction that at most one of the ``chunks'' in each codeword $g_j$ is broken up by the crossover point $h_{i,\bar{j}}$.
\begin{lemma}\label{lemma:one-broken-chunk}
    Fix $j \in \{0,\ldots,N-1\}$.  Suppose that $i \in \{0, \ldots, 2^k - 1\}$ is such that $j \in [r_i, r_{i+1})$, so 
    $g_j\in \calg $ can be written as $ g_j = s_1\circ \tilde{c}_1 \circ s_2 \circ \tilde{c}_2 \circ s_3$ as above.   
    Then at most one of the substrings in  $ \cals = \sett{s_1, s_2,s_3, \tilde{c}_1, \tilde{c}_2}$ that is not equal to the corresponding substring in  $w_i$ or $w_{i+1}$.
\end{lemma}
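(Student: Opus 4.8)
The plan is to treat this as a purely combinatorial statement about the way $g_j$ is glued together; it uses nothing about $\calc$ beyond the block structure of $w_i$ and $w_{i+1}$ (in particular, neither linearity nor the distance is needed here — those enter only in later lemmas). First I would dispose of the trivial case $j = r_i$: here Definition~\ref{def:calg} gives $g_j = w_i$, so every one of the five substrings in $\cals$ equals the corresponding substring of $w_i$, and there is nothing to prove.

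So assume $j \in (r_i, r_{i+1})$ and set $h := h_{i,\bar{j}}$, the crossover point; note that in this range $\bar{j} \le \Delta(w_i,w_{i+1})-1$, so $h$ is a genuine interior crossover index. By Definition~\ref{def:calg}, $g_j = \text{pref}_{h}(w_{i+1}) \circ \text{suff}_{h+1}(w_i)$; equivalently, $g_j$ agrees with $w_{i+1}$ on every coordinate up to and including $h$, and with $w_i$ on every coordinate after $h$. The key structural observation is that, since $d = 2n + 3D(\calc)$, the five substrings $s_1, \tilde{c}_1, s_2, \tilde{c}_2, s_3$ are exactly the restrictions of a length-$d$ string to the five consecutive, disjoint coordinate intervals
\[ [0, D(\calc)),\ [D(\calc), D(\calc)+n),\ [D(\calc)+n, 2D(\calc)+n),\ [2D(\calc)+n, 2D(\calc)+2n),\ [2D(\calc)+2n, d), \]
which together partition $\{0, \ldots, d-1\}$. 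Since $h \in \{0, \ldots, d-1\}$, it lies in exactly one of these intervals, say $I_m$. I would then argue chunk by chunk: for every interval $I_\ell$ lying entirely before $I_m$, all of its coordinates are $\le h$, so the corresponding substring of $g_j$ equals that of $w_{i+1}$; and for every interval lying entirely after $I_m$, all of its coordinates exceed $h$, so the corresponding substring of $g_j$ equals that of $w_i$. Hence every substring in $\cals$ other than the $m$-th one is equal to the corresponding substring of $w_i$ or of $w_{i+1}$, so at most one of them — the one whose interval contains the crossover point — can fail to equal either, which is exactly the claim. (It may also happen that the $m$-th substring still coincides with one of $w_i, w_{i+1}$ on that interval, but this only strengthens the conclusion.)

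There is no real obstacle here; the statement is essentially immediate once the chunk/interval bookkeeping is in place. The only points requiring care are (i) keeping the indexing conventions consistent — in particular treating the crossover point $h$ and the five interval endpoints with the same $0$-indexed convention, so that $h$ lies in a unique interval — and (ii) not forgetting the boundary case $j = r_i$. I would therefore keep the write-up short, spelling out only the partition of $\{0,\dots,d-1\}$ into the five chunk-intervals and the ``before/after the crossover'' dichotomy.
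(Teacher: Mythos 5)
Your argument is correct and is essentially the same as the paper's: both handle the boundary case $j = r_i$ separately, identify the crossover index $h_{i,\bar{j}}$, observe that it lies in exactly one of the five chunk-intervals, and conclude that chunks entirely to its left agree with $w_{i+1}$ while those entirely to its right agree with $w_i$. The paper states this more tersely, but the underlying reasoning matches.
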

\begin{proof}
    First, suppose that  $j = r_i$.  Then in that case $g_j = w_i$ and all of the substrings in $\mathcal{S}$ are equal to their corresponding substring.  Otherwise, $ j \in (r_i, r_{i+1})$. In that case, $\bar{j} \in (0, r_{i+1} - r_i) = (0, \Delta(w_i, w_{i+1}))$. 
    This means that $h_{i,\bar{j}}$ (the ``crossover point'' for $g_j$) is defined, and indexes a position in $g_j$, and in particular in one of the sub-strings in $\mathcal{S}$.  Then other substrings strictly to the left of $h_{i,\bar{j}} $ are equal to their corresponding substring in $w_{i+1}$; and the ones  strictly to the right are equal to the corresponding substring in $w_i$.
\end{proof}
Using the language of Lemma~\ref{lemma:one-broken-chunk}, we say that a substring in $\mathcal{S}$ that is equal to its corresponding substring in $w_i$ is a \textbf{\chunkn}.
Thus,  Lemma~\ref{lemma:one-broken-chunk} implies that there are at least four {\chunkn}s in any $g_j \in \calg$.
Notice that it is possible that a substring $\tilde{c}_\ell$ is in $\calc$ but is not a \chunkn.
We say that \textbf{all {\chunkn}s are decoded correctly} if, for \chunk of $x$, when we run the corresponding decoder, we get the right answer.  That is, if $\tilde{c}_1(x)$ is a \chunkn, then if we were to run $\dec{\calc}$ on $\tilde{c}_1(x)$ we would obtain $\tilde{c}_1(g_j)$, and similarly for $\tilde{c}_2$; and if $s_1(x)$ is a \chunkn, and we were to run $\maj_{D(\calc)}$ on $s_1(x)$, we would obtain $s_1(g_j)$, and similarly for $s_2$ and $s_3$.

Next, we show that if the ``crossover point'' $h_{i,\bar{j}}$ does not point to one of chunks $s_1,s_2$, or $s_3$, then there are at least two of them that differ. 

\begin{lemma}\label{lem:repetition-differ}
    Let $\calg$ be a code defined as in Definition~\ref{def:calg}. Fix any $g_j \in \calg$ and let $r_i$ be such that $j \in [r_i, r_{i+1})$.  Suppose that $h_{i,{\bar{j}}} \in [D(\calc), D(\calc) + n) \cup [2D(\calc) + n, 2D(\calc) + 2n)$;  that is, $h_{i,\bar{j}}$ indexes a position in $\tilde{c}_{i^\prime}(g_j)$ for some $i^\prime \in \sett{1,2}$. Then $ s_{i^\prime}(g_j) \neq s_{i^\prime+1}(g_j)$.
\end{lemma}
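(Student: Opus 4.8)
The plan is to read the claim directly off the coordinate layout of $g_j$ in Definition~\ref{def:calg} together with the explicit shape of the separator blocks in Definition~\ref{def:calw}; no counting or probabilistic argument is needed. The one preliminary point to record is structural: since the hypothesis presumes $h_{i,\bar j}$ is defined, we have $\bar j \ge 1$, hence $j \in (r_i, r_{i+1})$ and $g_j = \text{pref}_{h_{i,\bar j}}(w_{i+1}) \circ \text{suff}_{h_{i,\bar j}+1}(w_i)$. In particular, $g_j$ agrees with $w_{i+1}$ at every coordinate strictly before position $h_{i,\bar j}$, and with $w_i$ at every coordinate strictly after it.

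Then I would split on $i' \in \{1,2\}$. For $i' = 1$ we have $h_{i,\bar j} \in [D(\calc), D(\calc)+n)$. The block $s_1(g_j) = g_j[0:D(\calc)]$ occupies positions all $< D(\calc) \le h_{i,\bar j}$, so $s_1(g_j) = s_1(w_{i+1})$; the block $s_2(g_j) = g_j[D(\calc)+n:2D(\calc)+n]$ occupies positions all $\ge D(\calc)+n > h_{i,\bar j}$, so $s_2(g_j) = s_2(w_i)$. By Definition~\ref{def:calw}, $s_1(w_{i+1})$ is $0^{D(\calc)}$ or $1^{D(\calc)}$ according to the parity of $i+1$, while $s_2(w_i)$ is $0^{D(\calc)}$ or $1^{D(\calc)}$ according to the parity of $i$; as $i$ and $i+1$ have opposite parity (and $D(\calc)\ge 1$ since the base code is nontrivial), these blocks differ, so $s_1(g_j) \ne s_2(g_j)$. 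The case $i'=2$ is the same argument shifted: $h_{i,\bar j} \in [2D(\calc)+n, 2D(\calc)+2n)$ lies strictly after all positions of $s_2(g_j)$ and strictly before all positions of $s_3(g_j)$, so $s_2(g_j) = s_2(w_{i+1})$ and $s_3(g_j) = s_3(w_i)$, and the opposite parity of $i$ and $i+1$ again gives $s_2(g_j) \ne s_3(g_j)$.

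I do not expect a genuine obstacle. The only point needing a moment's care is the $\text{pref}$/$\text{suff}$ indexing convention — whether position $h_{i,\bar j}$ itself sits on the $w_{i+1}$ side or the $w_i$ side — but this is immaterial, because the hypothesis places $h_{i,\bar j}$ strictly inside a $\tilde{c}_{i'}$ block, whose coordinate range is disjoint from those of both $s_{i'}$ and $s_{i'+1}$; thus only the ``strictly before / strictly after'' statements about $g_j$ are invoked, and these hold under either convention. (Alternatively one could route the same reasoning through Observation~\ref{obs:unary}, since the coordinates on which $g_j$ and $w_i$ disagree all lie in $h_i$ and hence, under the hypothesis, outside $s_{i'}$ and $s_{i'+1}$; but the direct computation above is cleaner.)
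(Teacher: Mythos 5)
Your proof is correct and takes essentially the same approach as the paper: both observe that the crossover $h_{i,\bar j}$ lands strictly inside $\tilde c_{i'}$, so $s_{i'}(g_j)$ matches $w_{i+1}$, $s_{i'+1}(g_j)$ matches $w_i$, and the opposite parities of $i$ and $i+1$ force these all-zero/all-one blocks to differ.
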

\begin{proof} 
Without loss of generality, suppose that $i'=1$.  
By definition, we have
\[ g_j = \mathrm{pref}_{h_{i,\bar{j}}}(w_{i+1}) \circ \mathrm{suff}_{h_{i,\bar{j}} + 1 }(w_i).\]
In particular, since the ``cut-off'' $h_{i, \bar{j}}$ points to a position within $\tilde{c}_1(g_j)$, we have that both $s_1(g_j)$ and $s_2(g_j)$ are {\chunkn}s, and further $s_1(g_j)$ agrees with $w_{i+1}$, while $s_2(g_j)$ agrees with $w_i$.  Since $i$ and $i+1$ have different parities, either $s_1(g_j) = 0^{D(\calc)}$ and $s_2(g_j) = 1^{D(\calc)}$, or the other way around; in either case, they are different.
The same argument holds when $i'=2$.

\end{proof}

Finally, we break things up into three cases, which will be reflected in our algorithm.  In each case, we can use the pattern of the chunks $s_1(x), s_2(x), s_3(x)$ to get an estimate for $c_i$ or $c_{i+1}$, and bound where the crossover point $h_{i,\bar{j}}$ will be.
\begin{lemma}\label{lem:dec-g}
    Let $g_j\in \calg$ and let $i$ be such that $j\in [r_i,r_{i+1})$.  Let $\eta\sim \ber(p)^d$  where $p\in (0,1/2)$. Let $ x = g_j + \eta$ be a received input. Then define $\vhat_{i^\prime} = \dec{\calc}( \tilde{c}_{i^\prime}(x))$ for $i^\prime \in \sett{1,2}$ and $ b_{i^\prime} = \maj_{D(\calc)} ( s_{i^\prime} (x))$ for $i^\prime\in \sett{1,2,3}$. Assume that all {\chunkn}s are decoded correctly by their corresponding decoder.
    Then the following hold.
    \begin{enumerate}
        \item If $(b_1,b_2,b_3) \in \sett{(1,1,0), (0,0,1)}$, then $\enc{\calc}(\vhat_1) = c_{i+1}$ and $h_{i,\bar{j}} \geq n + D(\calc)$.
        \item If $(b_1,b_2,b_3) \in  \sett{(0,1,1), (1,0,0)}$, then $ \enc{\calc}(\vhat_2) = c_i$ and $ h_{i,\bar{j}} \leq n + 2 D(\calc)$.
        \item If $(b_1,b_2,b_3)\in \sett{ (0,0,0) , (1,1,1)} $, then $ \enc{\calc} (\vhat_1)= \enc{\calc}(\vhat_2)\in \sett{c_i, c_{i+1}}$ and $ h_{i,\bar{j}} \in [0,D(\calc))\cup [d- D(\calc) , d )$.
        Moreover, if $h_{i,\bar{j}} \in [0,D(\calc))$, then $\tilde{c}_1(g_j) = \tilde{c}_2(g_j) = c_i$; and otherwise they are equal to $c_{i+1}$.
    \end{enumerate}

\end{lemma}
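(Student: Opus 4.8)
The plan is to reduce the whole statement to a single structural picture about the location of the crossover point, and then carry out routine interval bookkeeping.

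Write $v_i := 0$ when $i$ is even and $v_i := 1$ when $i$ is odd, so that the three $s$-chunks of $w_i$ are the constant strings $v_i^{D(\calc)}$, those of $w_{i+1}$ are $(1-v_i)^{D(\calc)}$, and the $\tilde{c}$-chunks of $w_i,w_{i+1}$ are $c_i,c_{i+1}$. If $j=r_i$ then $g_j=w_i$ and all three conclusions hold immediately; this is the degenerate sub-case of conclusion~3, read with the convention $h_{i,\bar{j}}:=0$. So assume $j\in(r_i,r_{i+1})$ and put $h:=h_{i,\bar{j}}$, so that by Definition~\ref{def:calg} we have $g_j=\mathrm{pref}_h(w_{i+1})\circ\mathrm{suff}_{h+1}(w_i)$: positions $0,\dots,h$ of $g_j$ agree with $w_{i+1}$ and positions $h+1,\dots,d-1$ agree with $w_i$. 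I would case on which of the five coordinate ranges $[0,D(\calc))$ (that of $s_1$), $[D(\calc),D(\calc)+n)$ ($\tilde{c}_1$), $[D(\calc)+n,2D(\calc)+n)$ ($s_2$), $[2D(\calc)+n,2D(\calc)+2n)$ ($\tilde{c}_2$), $[2D(\calc)+2n,d)$ ($s_3$) contains $h$. In every case the only chunk in $\cals$ that can fail to be a full chunk is the one whose range contains $h$; every other chunk is full and agrees with $w_{i+1}$ if its range lies in $\{0,\dots,h\}$ and with $w_i$ if its range lies in $\{h+1,\dots,d-1\}$. Since all full chunks are decoded correctly by hypothesis, this gives the preliminary claim: every full $s$-chunk $s_\ell$ has $b_\ell=1-v_i$ or $b_\ell=v_i$ according to which side of $h$ its range lies on; every full $\tilde{c}$-chunk $\tilde{c}_\ell$ has $\enc{\calc}(\vhat_\ell)=c_{i+1}$ or $\enc{\calc}(\vhat_\ell)=c_i$ correspondingly; and $\tilde{c}_\ell(g_j)\in\{c_{i+1},c_i\}$ the same way.

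The key consequence to extract is a monotonicity fact: reading $s_1,s_2,s_3$ left to right, the full ones take the value $1-v_i$ on an initial segment and $v_i$ on the complementary segment. Concretely I would record three observations, each immediate from the preliminary claim and the five ranges. (a) If $h$ lies in the range of $s_1$ or of $\tilde{c}_1$, then $s_2$ and $s_3$ are both full with ranges to the right of $h$, so $b_2=b_3=v_i$; symmetrically, if $h$ lies in the range of $s_3$ or of $\tilde{c}_2$, then $b_1=b_2=1-v_i$; and if $h$ lies in the range of $s_2$, then $b_1=1-v_i$ and $b_3=v_i$. (b) If $h\ge D(\calc)+n$ then $\tilde{c}_1$ is a full chunk with range to the left of $h$, so $\enc{\calc}(\vhat_1)=c_{i+1}$; if $h<2D(\calc)+n$ then $\tilde{c}_2$ is a full chunk with range to the right of $h$, so $\enc{\calc}(\vhat_2)=c_i$. (c) If $h<D(\calc)$ then both $\tilde{c}$-chunks are full with ranges to the right of $h$, hence equal to $c_i$ and $\enc{\calc}(\vhat_1)=\enc{\calc}(\vhat_2)=c_i$; if $h\ge d-D(\calc)$ then both are full with ranges to the left of $h$, hence equal to $c_{i+1}$ and $\enc{\calc}(\vhat_1)=\enc{\calc}(\vhat_2)=c_{i+1}$.

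Now I match the three hypotheses. If $(b_1,b_2,b_3)\in\{(1,1,0),(0,0,1)\}$ then $b_1=b_2\neq b_3$, so by (a) the crossover $h$ cannot lie in the range of $s_1$ or of $\tilde{c}_1$ (either would force $b_2=b_3$); hence $h\ge D(\calc)+n$, and (b) yields $\enc{\calc}(\vhat_1)=c_{i+1}$ and $h\ge n+D(\calc)$ — conclusion~1. Symmetrically, $(b_1,b_2,b_3)\in\{(0,1,1),(1,0,0)\}$ gives $b_1\neq b_2=b_3$, so $h$ cannot lie in the range of $s_3$ or of $\tilde{c}_2$; hence $h<2D(\calc)+n$, and (b) yields $\enc{\calc}(\vhat_2)=c_i$ and $h\le n+2D(\calc)$ — conclusion~2. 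Finally, $(b_1,b_2,b_3)\in\{(0,0,0),(1,1,1)\}$ gives $b_1=b_2=b_3$, so by (a) $h$ cannot lie in the range of $\tilde{c}_1$, of $s_2$, or of $\tilde{c}_2$ (each of these puts two full $s$-chunks on opposite sides of $h$, forcing a disagreement among $b_1,b_2,b_3$); hence $h$ lies in the range of $s_1$ or of $s_3$, i.e. $h\in[0,D(\calc))\cup[d-D(\calc),d)$, and (c) yields $\enc{\calc}(\vhat_1)=\enc{\calc}(\vhat_2)\in\{c_i,c_{i+1}\}$ together with the ``moreover'' (equal to $c_i$ on $[0,D(\calc))$, to $c_{i+1}$ on $[d-D(\calc),d)$) — conclusion~3. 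The degenerate $j=r_i$ case falls under $h\in[0,D(\calc))$.

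I expect the only real difficulty here to be organizational rather than conceptual: matching the half-open coordinate ranges of the five chunks against the inequalities $h\ge n+D(\calc)$, $h\le n+2D(\calc)$, and $h\in[0,D(\calc))\cup[d-D(\calc),d)$ without off-by-one slips; observing that when $h$ falls on a chunk boundary the chunk ``containing'' $h$ may in fact be a full chunk, which is harmless since the argument only ever uses that it is the unique chunk that \emph{might} fail to be full; and disposing of the endpoint $j=r_i$. It is also worth remarking, via observation (a), that the patterns $(1,0,1)$ and $(0,1,0)$ cannot arise when all full chunks are decoded correctly, so the three cases of the lemma are exhaustive over the patterns that actually occur.
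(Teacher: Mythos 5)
Your proof is correct and follows essentially the same approach as the paper's: both arguments rest on the observation that the crossover point $h_{i,\bar{j}}$ splits the five chunks so that every full $s$-chunk to its left equals $s_\ell(w_{i+1})$ and every full $s$-chunk to its right equals $s_\ell(w_i)$, and then read off which ranges can contain $h_{i,\bar{j}}$ from the observed pattern $(b_1,b_2,b_3)$. The only difference is organizational — the paper phrases each case as a proof by contradiction and invokes Lemma~\ref{lem:repetition-differ} explicitly, whereas you enumerate the possible locations of $h_{i,\bar{j}}$ up front in observations (a)--(c) (re-deriving the content of Lemma~\ref{lem:repetition-differ} inline) and argue by elimination.
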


    \begin{proof} 
    We address each case individually.
    \begin{enumerate}
        \item If $(b_1,b_2,b_3) \in \sett{(1,1,0), (0,0,1)}$ then we claim that $h_{i,\bar{j}} \geq n + D(\calc)$. Assume otherwise. If $h_{i,\bar{j}}\in [0,D(\calc))$, then $g_j[D(\calc):] =  w_i[D(\calc):]$. This means that $s_2(g_j) = s_3(g_j)$, and are {\chunkn}s. Given the assumption that all the {\chunkn}s are decoded correctly, $b_2 =\maj(s_2(x)) = \maj(s_3(x)) = b_3$ but that contradicts our assumption for this case; so we conclude that $h_{i,\bar{j}} \not\in [0,D(\calc))$. Thus, $h_{i,\bar{j}} \in [D(\calc) , n + D(\calc))$.  But then then $s_1(g_j)$ and $s_2(g_j)$ are {\chunkn}s, and  according to Lemma~\ref{lem:repetition-differ}, $s_1(g_j) \neq s_2(g_j)$. Again using the assumption of correct decoding of all {\chunkn}s, this implies that $b_1 \neq b_2$, which again contradicts our assumption for this case. This establishes the claim that $h_{i,\bar{j}} \geq n + D(\calc)$.  
        
        Finally, the fact that $h_{i,\bar{j}} \geq n + D(\calc)$ implies that $\tilde{c}_1 (g_j) = \tilde{c}_1(w_{i+1}) = c_{i+1}$, and $\tilde{c}_1(g_j)$ is a \chunkn. Using the assumption of correct decoding of all {\chunkn}s, we get that $\enc{\calc} (\vhat_1)=c_{i+1}$
        \item If $(b_1,b_2,b_3) \in  \sett{(0,1,1), (1,0,0)}$, then the conclusion follows by an argument nearly identical to case 1. 
        \item If $(b_1,b_2,b_3)\in \sett{ (0,0,0) , (1,1,1)} $, then we claim that $ h_{i,\bar{j}} \in [0,D(\calc))\cup [d- D(\calc) , d )$. Assume otherwise, then $s_1(g_j) = s_1 (w_{i+1})$ and $s_3(g_j) = s_3 (w_i)$ and they are {\chunkn}s. Now as $i$ and $i+1$ do not have the same parity, $s_1(g_j)\neq s_3(g_j)$. As a result, if all {\chunkn}s are decoded correctly, we have that $ b_1 \neq b_3$, which contradicts our assumption in this case. This proves our claim that $h_{i,\bar{j}} \in [0,D(\calc)) \cup [d-D(\calc), d)$. 
        
        If $h_{i,\bar{j}}\in [0,D(\calc))$ then $\tilde{c}_1(g_j) = \tilde{c}_2(g_j) = c_{i}$;  if $h_{i,\bar{j}} \in [ d- D(\calc), d )  $ then $ \tilde{c}_1 (g_j) = \tilde{c}_2 (g_j) = c_{i+1}$; and in either case both are {\chunkn}s. Using the assumption that all {\chunkn}s are decoded correctly, we see that $ \enc{\calc} (\vhat_1)= \enc{\calc}(\vhat_2)\in \sett{c_i, c_{i+1}}$, as desired.
    \end{enumerate}
    \end{proof}

\subsection{{Decoding Algorithm}}

Before we state our main algorithm (Algorithm~\ref{alg:Dec} below), we include a helper algorithm, \compr \ (Algorithm~\ref{alg:compr}).  This algorithm takes an index $i \in \{0, \ldots, 2^k-1\}$ and returns $r_i$.  Note that this is not trivial to do efficiently: If we wanted to compute $r_i$ directly from the definition, that would require computing or storing $\Delta(w_\ell, w_{\ell+1})$ for all $\ell \leq i$ and adding them up, which may take time $\Omega(2^k)$.  Instead, we do something much faster.
\begin{algorithm}[H]
\caption{\compr}
    \begin{algorithmic}
        \State \textbf{Input:} $ i \in \sett{0,..., 2^{k}-1}$
        \State $\hat{r}_i = 0 $
        \For{$z \in \sett{0,\ldots, k-1} $}
            \State $\hat{r}_i = \hat{r}_i +2\lfloor  \frac{ \max(i- 2^z,0) }{2^{z+1} } \rfloor \cdot \|a_z\|  + 3D(\calc)$
            \Comment{$a_z$ is the $z$'th row of the generator matrix of $\calc$.}
        \EndFor
        \State \textbf{Return:} $\hat{r}_i$
    \end{algorithmic}
    \label{alg:compr}
\end{algorithm}
\begin{lemma}\label{lem:computer}
    The Algorithm \compr\  (Algorithm~\ref{alg:compr}) correctly computes $r_i$.
\end{lemma}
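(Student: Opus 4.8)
The plan is to compute $r_i$ in closed form directly from its definition, $r_i=\sum_{\ell=1}^{i}\Delta(w_{\ell-1},w_\ell)$, and verify that this closed form is exactly what \compr\ accumulates.

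First I would evaluate a single term $\Delta(w_{\ell-1},w_\ell)$ using the block structure of Definition~\ref{def:calw}. Each $w_\ell$ is built from three length-$D(\calc)$ spacer blocks interleaved with two copies of $c_\ell$. Since $\ell-1$ and $\ell$ have opposite parities, the spacer blocks of $w_{\ell-1}$ are the bitwise complements of those of $w_\ell$, so the three spacers together contribute exactly $3D(\calc)$ to the Hamming distance. For the two embedded codewords, the ordering of Definition~\ref{def:order-c} (equivalently, the identity $c_\ell=\calr_k(\ell)^{T}A_\calc$ noted just after it) gives $c_\ell-c_{\ell-1}=a_{z(\ell)}$, a single row of the generator matrix, where $z(\ell)$ is the coordinate in which $\calr_k(\ell-1)$ and $\calr_k(\ell)$ differ. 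Hence $\Delta(w_{\ell-1},w_\ell)=3D(\calc)+2\|a_{z(\ell)}\|$, and summing over $\ell$ yields
\[
r_i \;=\; 3\,i\,D(\calc)\;+\;2\sum_{\ell=1}^{i}\|a_{z(\ell)}\|\;=\;3\,i\,D(\calc)\;+\;2\sum_{z=0}^{k-1}m_z(i)\,\|a_z\|,
\]
where $m_z(i)=\bigl|\{\ell\in[i]:z(\ell)=z\}\bigr|$.

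Second, I would identify $m_z(i)$ explicitly using the structure of the binary reflected code. The key fact is the classical ``ruler sequence'' property of $\calr_k$: $z(\ell)$ equals the largest $z$ with $2^{z}\mid \ell$ (the number of trailing zeros of $\ell$). I would prove this by induction on $k$ from Definition~\ref{def:BRC}: the leading coordinate of $\calr_k$ flips only at the midpoint $\ell=2^{k-1}$, whose $2$-adic valuation is $k-1$, and on each half the remaining coordinates follow the (possibly reversed) flip schedule of $\calr_{k-1}$, which preserves valuations under the relevant reindexing. Consequently $m_z(i)$ is the number of odd multiples of $2^{z}$ lying in $[i]$, a quantity that a short arithmetic computation puts in the $\lfloor\,\cdot\,/2^{z+1}\rfloor$ form appearing in the update rule of \compr. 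Plugging this back into the displayed formula for $r_i$ shows that the for-loop of \compr\ accumulates the two right-hand-side sums, so \compr\ returns precisely $r_i$.

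The main obstacle is this second step: pinning down the exact coordinate-flip schedule of $\calr_k$ from the recursive definition, and then converting ``number of odd multiples of $2^z$ at most $i$'' into the precise floor expression used in \compr. The induction itself is routine but must be set up with care, since the recursion in Definition~\ref{def:BRC} prepends a bit (shifting coordinate indices between $\calr_{k-1}$ and $\calr_k$) and uses a reflected second half; the boundary cases $i<2^z$ and $i=0$ also need to be checked so that the $\max(\cdot,0)$ and the floor behave as intended. The remaining pieces --- the block-wise distance computation and the final substitution --- are direct calculations.
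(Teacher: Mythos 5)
Your proposal follows essentially the same route as the paper's proof: express $\Delta(w_{\ell-1},w_\ell)$ as $2\|a_{z(\ell)}\|+3D(\calc)$ via Definitions~\ref{def:order-c} and~\ref{def:calw}, sum over $\ell\le i$, and reduce $r_i$ to counting how often each row index $z$ is used, which yields the floor expression that \compr\ accumulates. The paper simply asserts that this count equals $\lfloor\max(i-2^z,0)/2^{z+1}\rfloor$ ``by the definition of $\calr_k$,'' whereas you fill in the justification via the ruler-sequence property of the binary reflected code; this is a faithful elaboration rather than a different argument, and you are right to flag the bit-index convention (Definition~\ref{def:BRC} prepends the new coordinate, so the ``trailing zeros'' position must be tracked with care) and the boundary cases as the only fiddly parts to pin down.
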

\begin{proof}
    Recall that $r_i = \sum_{\ell=0}^{i-1} \Delta(w_\ell, w_{\ell+1})$.  Consider a fixed difference $\Delta(w_\ell, w_{\ell+1})$.  This is precisely \begin{equation}\label{eq:diff}
    \Delta(w_\ell, w_{\ell+1}) = 2\|a_{z_\ell}\| + 3D(\calc),
    \end{equation} where $z_\ell$ is the unique index so that $\calr_k(\ell)[z_\ell] \neq \calr_k(\ell+1)[z_{\ell}]$: indeed, by Definition~\ref{def:order-c}, $\Delta(c_\ell, c_{\ell+1}) = \|a_{z_\ell}\|$, and from that \eqref{eq:diff} follows from the definition of $\calw$ (Definition~\ref{def:calw}).  Thus, in order to compute
    \[ r_i = \sum_{\ell=0}^{i-1} (2\|a_{z_\ell}\| + 3D(\calc)), \] it suffices to count how often each index $z \in \{0,\ldots, k-1\}$ shows up as some $z_\ell$ in that sum.  This is precisely $\left\lfloor \frac{\max(i-2^z, 0)}{2^{z+1}} \right\rfloor,$ by the definition of $\calr_k$.
\end{proof}

Our final algorithm is given in Algorithm~\ref{alg:Dec}. It is organized into the three cases of Lemma~\ref{lem:dec-g}. To help the reader, we have included comments saying what each estimate ``should'' be.  Here, ``should'' is under the assumption that each \chunk is decoded correctly.

\begin{algorithm}
\caption{$\dec{\calg}$: Decoding algorithm for $\calg$}\label{alg:Dec}
\begin{algorithmic}[1] 
    \State \textbf{Input:} $x = g_j + \eta \in \F_2^d$
    \State \textbf{Output:} $\hat{j} \in [N]$

    \For{ $\ell \in \sett{1,2}$}\label{line:start-small-dec}
    \State $\hat{v}_\ell = \dec{\calc} 
    (\tilde{c}_\ell(x))$
    \gComment{Decode $\tilde{c}_1(x)$ and $\tilde{c}_2(x)$ individually to obtain $\hat{v}_1, \hat{v}_2 \in \sett{0,\ldots, 2^k -1}$.}
    \EndFor
    
    \For{ $\ell \in \sett{1,2,3}$ }
    
        \State $b_\ell= \maj_{D(\calc)} ( s_\ell(x))$ \label{line:get_bi}
        \gComment{Decode each $s_\ell(x)$ to obtain $b_\ell \in \sett{0,1}$.}
    \EndFor \label{line:end-small-dec}
    \State \gComment{Below, in the comments we note what each value ``should'' be. 
 This is what these values will be under the assumption that each \chunk is decoded correctly.}

    \If{$(b_1,b_2,b_3) \in \sett{(1,1,0), (0,0,1)}$}
    \gComment{\textbf{Case 1:} $\enc{\calc} (\vhat_1)$ should be $c_{i+1}$}
    \label{line:case1}
        \State $\hat{\iota} = \calr_k^{-1}( \vhat_1)$
        \gComment{ $\hat{\iota}$ should be $i+1$}
        \State $\vhat = \calr_{k}(\hat{\iota} - 1 ) $
        \gComment{$\vhat$ should be the BRC corresponding to $i$}
        \State $\chat_1 = \enc{\calc} (\vhat)$
        \gComment{ $\chat_1$ should be $c_i$}
        \State $\chat_2 = \enc{\calc} (\vhat_1) $
        \gComment{$\chat_2$ should be $c_{i+1}$}
        \State $\what_1 = \enc{\calw} ( \chat_1)$ \label{line:case1-wi}
        \gComment{$\hat{w}_1$ should be $ w_i $}
        \State $\what_2 = \enc{\calw} ( \chat_2)$
        \gComment{$\what_2$ should be $w_{i+1}$} 
        \State $H' = \{\ell \in h_{\hat{\iota} - 1} \,:\, \ell \geq n + D(\calc)\}$
        \gComment{Should be the set of indices that appear in $h_i$ that are at least $n + D(\calc)$.}

        \State $ u = \dec{\calu} ( x [ H^\prime ] + \what_1[H^\prime] )$\label{line:est1}
        \gComment{$u$ is an estimate of $h_{i,\bar{j}} - \Delta(c_i, c_{i+1})-D(\calc)$}\label{line:case1-u}
        \State $ \jhat = u + D(\calc) + \Delta ( \chat_1, \chat_2)  + \compr (\hat{ \iota}-1 )$

    \ElsIf { $(b_1,b_2,b_3) \in  \sett{(0,1,1), (1,0,0)}$}
    \gComment{\textbf{Case 2:} $\enc{\calc} (\vhat_2)$ should be $c_i$}
    \State $\hat{\iota} = \calr_k^{-1} (\vhat_2)$ 
    \gComment{$\hat{\iota}$ should be $i$}
    \State $\vhat = \calr_k (\hat{\iota} + 1 ) $ 
    \gComment{$\vhat$ should be the BRC corresponding to $ i + 1 $}
    \State $\chat_1 = \enc{\calc} (\vhat_2)$ 
    \gComment{$\chat_1$ should be $c_i$}
    \State $\chat_2 = \enc{\calc} (\vhat)$
    \gComment{$\chat_2$ should be $c_{i+1}$}
    \State $\what_1 = \enc{\calw} ( \chat_1)$ \label{line:case2-wi} 
    \gComment{$\what_1$ should be $ w_i $}
    \State $\what_2 = \enc{\calw} ( \chat_2)$
    \gComment{$\what_2$ should be $w_{i+1}$}
     \State $H' = \{\ell \in h_{\hat{\iota} - 1} \,:\, \ell < n + 2D(\calc)\}$
        \State \gComment{$H'$ should be the set of indices that appear in $h_i$ that are less than $n + 2D(\calc)$.}

    \State $ u = \dec{\calu} ( x [ H^\prime ] + \what_1[H^\prime] )$ \label{line:est2}
    \gComment{$u$ is an estimate of $h_{i,\bar{j}} \leq 2D(\calc) + n$} \label{line:case2-u}
    \State $\jhat = u + \compr(\hat{\iota})$ 

    \ElsIf{ $(b_1,b_2,b_3)\in \sett{ (0,0,0) , (1,1,1)}$}
    \State\gComment{\textbf{Case 3:} $\enc{\calc}(\hat{v}_1)$ and $\enc{\calc}(\hat{v}_2)$ should be equal to each other, and to either $c_i$ or $c_{i+1}$, but we need to figure out which one.}
    \State $\hat{\iota} = \calr_k^{-1} (\vhat_1)$
    \gComment{$\hat{\iota}$ should be either $i$ or $i+1$}
    \State $\vhat = \calr_k (\hat{\iota} - 1 ) $
    \gComment{ $\vhat$ should be BRC encoding of $i$ or $i-1$ depending on $\hat{\iota}$}
    \State $\chat_1 = \enc{\calc} (\vhat)$ 
    \gComment{$\chat_1= c_i$ if $\hat{\iota} = i+1$ and $\chat_1 = c_{i-1}$ if $\hat{\iota} = i$  }
    \State $\chat_2 = \enc{\calc} (\vhat_1)$
        \gComment{$\chat_2= c_{i+1}$ if $\hat{\iota} = i+1$ and $\chat_1 = c_{i}$ if $\hat{\iota} = i$  }
    \State $\what = \enc{W} ( \chat_2)$ \label{line:case3-what}
    \gComment{$\what = w_{i+1}$ if $\hat{\iota} = i+1$ and $\what = w_i$ if $\hat{\iota} = i$  }
    \State $u_1 = \dec{\calu} ( x[ < D(\calc)] + b_1^{D(\calc)} ) $  \label{line:est3}
    \gComment{$u_1$ is an estimate of $h_{i,\bar{j}} < D(C)$ assuming  $\what = w_i$}
    \State $u_2 = \dec{\calu} (x[> 2 D(\calc) + 2 n ] + \bar{b}_1^{D(\calc) })$ \label{line:est4}\label{line:case3-u}
\State    \gComment{$u_2$ is an estimate of $h_{i,\bar{j} } - 2D(\calc) - 2\Delta(c_i, c_{i+1})$ assuming $\what = w_{i+1}$}
    \State $\jhat_1 = u_1 + \compr  (\hat{\iota})$ \label{line:case3-j1}
    \gComment{ Estimate for $j$ assuming $\what = w_i$}
    \State $\jhat_2 = u_2 + 2D(\calc) + 2 \Delta (\chat_1, \chat_2 ) + \compr (\hat{\iota}-1) )$ \label{line:case3-j2} 
        \gComment{ Estimate for $j$ assuming $\what = w_{i+1}$}
    \State $\ghat_1 = \enc{\calg} (\jhat_1)$
    \State $\ghat_2 = \enc{\calg} (\jhat_2)$
    \State $\jhat = \min_{j^\prime \in \sett{\jhat_1 ,\jhat_2} } \Delta ( x, \ghat_{j^\prime})$\label{line:case3-jhat}
    
    \EndIf
    
\end{algorithmic}
\end{algorithm}

\subsection{Analysis}\label{sec:analysis}

Next, we analyze the correctness and running time of Algorithm~\ref{alg:Dec}.
We begin with the running time.

\begin{lemma}\label{lem:running time}
Let $\calc \subseteq \F_2^n$ be a constant rate code.
    Suppose that $\dec{\calc}$ runs in time $T_{\dec{\calc}}(n)$, and $\enc{\calc}$ runs in time $T_{\enc{\calc}}(n)$, and that $D(\calc) = o(n)$.
  Let $A_{\calc}$ be the generator matrix of $\calc$, with rows $a_z$ for $z \in \{0,\ldots, 2^k-1\}$.  Suppose that $\|a_z\|$ can be computed in time $O(1)$.  Then the running time is of $\dec{\calg}$ is 
    $$O(T_{\dec{\calc}}(d) + T_{\enc{\calc}}(d) + d),$$
    and
the running time of $\enc{\calg}$ is $O(T_{\enc{\calc}}(d)).$

\end{lemma}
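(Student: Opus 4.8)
The plan is to walk through both algorithms line by line and bound the cost of each step, using the two facts that $d = 2n + 3D(\calc) = O(n)$ (since $D(\calc) = o(n)$) and that $\calc$ has constant rate, so $k = \Theta(n) = \Theta(d)$ and $2^k$ is exponential in $d$; the point of the lemma is precisely that we never touch anything of size $2^k$. For $\enc{\calg}$, I would first recall from Definition~\ref{def:calg} that $g_j$ is either some $w_i$ or an interpolation $\mathrm{pref}_{h_{i,\bar j}}(w_{i+1}) \circ \mathrm{suff}_{h_{i,\bar j}+1}(w_i)$. So the encoder must: (i) find the block index $i$ with $j \in [r_i, r_{i+1})$, (ii) form $w_i$ and $w_{i+1}$, and (iii) splice them at the appropriate crossover index. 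Step (ii) costs $O(T_{\enc{\calc}}(n)) = O(T_{\enc{\calc}}(d))$ plus $O(d)$ to lay out the $s_\ell$ blocks, after we know which codewords of $\calc$ to encode; the codeword indices come from $\calr_k$ and $\calr_k^{-1}$, which are computable in $O(k) = O(d)$ time. Step (iii) is $O(d)$. The only subtle part of (i) is locating $i$: I would binary-search over $i \in \{0,\ldots,2^k-1\}$, evaluating $\compr(\cdot)$ (Algorithm~\ref{alg:compr}, correct by Lemma~\ref{lem:computer}) at each probe; each $\compr$ call is a sum of $k$ terms, each computable in $O(1)$ using the assumption that $\|a_z\|$ is $O(1)$-time, so $\compr$ runs in $O(k) = O(d)$, and the binary search makes $O(\log 2^k) = O(k) = O(d)$ probes, for $O(d^2)$ total — hmm, that is worse than claimed. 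I would instead note that in the \emph{decoder} the value $j$ being fed to $\enc{\calg}$ already comes packaged with the block index $\hat\iota$ (that is how $\jhat$ is built, as $u + (\text{offsets}) + \compr(\hat\iota)$ or $\compr(\hat\iota-1)$), so no search is needed there; and for a stand-alone encoder one can either accept $O(d^2)$ or observe that $\compr$ is monotone and piecewise-affine so the search can be done digit-by-digit on the binary expansion of the answer in $O(d)$ total. I expect the cleanest writeup just bounds $\enc{\calg}$ by $O(T_{\enc{\calc}}(d) + d)$ and folds the search cost into the statement, matching the $O(T_{\enc{\calc}}(d))$ claim when $T_{\enc{\calc}}(d) = \Omega(d)$, which holds for any reasonable code.

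For $\dec{\calg}$, I would go through Algorithm~\ref{alg:Dec} region by region. The preprocessing block (lines~\ref{line:start-small-dec}--\ref{line:end-small-dec}) runs $\dec{\calc}$ twice on strings of length $n \le d$, costing $O(T_{\dec{\calc}}(d))$, and computes three majorities over $D(\calc) \le d$ bits, costing $O(d)$. In each of the three cases, the work consists of: a constant number of calls to $\calr_k$ and $\calr_k^{-1}$ (each $O(k) = O(d)$); a constant number of calls to $\enc{\calc}$ and to $\enc{\calw}$ — and $\enc{\calw}$ is just $\enc{\calc}$ followed by an $O(d)$ assembly of the $s_\ell$ blocks, hence $O(T_{\enc{\calc}}(d) + d)$; constructing the index set $H'$ by filtering $h_{\hat\iota-1}$, where $h_{\hat\iota-1}$ has length $\Delta(w_{\hat\iota-1}, w_{\hat\iota}) - 1 = O(n + D(\calc)) = O(d)$ entries and can itself be produced in $O(d)$ time by comparing $w_{\hat\iota-1}$ and $w_{\hat\iota}$; a call to $\dec{\calu}$ on a string of length $|H'| \le d$, which by the definition of $\dec{\calu}$ as an $\mathrm{argmin}$ over $\ell+1 \le d+1$ shifted unary codewords is $O(d)$ (the distances can be updated incrementally as $v$ increases); one or two calls to $\compr$, each $O(d)$; and in Case~3, two calls to $\enc{\calg}$ (each $O(T_{\enc{\calc}}(d) + d)$ by the first part) and two Hamming-distance computations against $x \in \F_2^d$, each $O(d)$. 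Summing the at-most-constant number of such operations in whichever case fires, the total is $O(T_{\dec{\calc}}(d) + T_{\enc{\calc}}(d) + d)$, as claimed.

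The main obstacle — and the only place the argument is not completely mechanical — is justifying that $\compr$, the unary decoder $\dec{\calu}$, and the index-set $H'$ can all be handled in $O(d)$ without ever enumerating the $2^k$ codewords of $\calg$ or the $N$ codewords of $\calg$; this is exactly the efficiency point flagged in the introduction ("naively adding up the number of codewords of $\calg$ that come before $w_i$ would take exponential time"). So the proof should emphasize: $\compr$ replaces an $\Omega(2^k)$-term sum by an $O(k)$-term closed form (Lemma~\ref{lem:computer}); $\dec{\calu}$'s $\mathrm{argmin}$ is over only $d+1$ candidates, not $N$; and $h_{\hat\iota-1}$ is read off directly from the two relevant $w$'s in $O(d)$ time rather than by walking the Gray-code path. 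A secondary subtlety worth a sentence is the $\calr_k$ / $\calr_k^{-1}$ cost: both the BRC encoding and its inverse (which is the standard Gray-to-binary prefix-XOR) are computable in $O(k)$ arithmetic on $k$-bit strings, i.e.\ $O(k) = O(d)$, so they are absorbed into the $O(d)$ term. With those three points made explicit, every remaining line is a constant number of length-$O(d)$ string manipulations plus $O(1)$ invocations of the base code's encoder/decoder, giving the stated bounds.
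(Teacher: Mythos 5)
Your proof follows the same high-level route as the paper's — a step-by-step cost accounting of Algorithm~\ref{alg:Dec} — but is considerably more thorough than the paper's own one-paragraph argument, which only observes that each $b_\ell$ is $o(n)$, that $\enc{\calc}$ and $\dec{\calc}$ are invoked $O(1)$ times, and that $\compr$, $\calr_k$, $\calr_k^{-1}$ are each $O(k)=O(d)$. You explicitly supply the $O(d)$ justifications the paper leaves tacit: that $\dec{\calu}$'s $\mathrm{argmin}$ ranges over only $|H'|+1 \le d+1$ candidates and can be evaluated with incremental distance updates, that $h_{\hat\iota-1}$ and $H'$ are read off directly from the two relevant $\calw$-codewords in $O(d)$ time, and that $\enc{\calw}$ is just $\enc{\calc}$ plus $O(d)$ assembly of the $s_\ell$ blocks. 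More significantly, you correctly flag a point that the paper's proof does not address at all: a stand-alone call to $\enc{\calg}(j)$ must first locate $i$ with $j\in[r_i,r_{i+1})$, and a naive binary search with $\compr$ at each probe is $O(d^2)$, not the claimed $O(T_{\enc{\calc}}(d))$. Your two resolutions are both reasonable and worth stating explicitly — inside $\dec{\calg}$ the block index $\hat\iota$ is always already in hand when $\enc{\calg}$/$\enc{\calw}$ are invoked (indeed $\jhat_1, \jhat_2$ are built as $u_\ell$ plus $\compr$ of a known index), so the decoder's bound is unaffected; and for the stand-alone encoder the stated bound implicitly assumes $T_{\enc{\calc}}(d)=\Omega(d)$ together with an $O(d)$ or near-linear search, which you are right to say needs either a digit-by-digit argument or an acknowledged $O(d\log d)$/$O(d^2)$ fallback. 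In short, your argument is correct, matches the paper's strategy, and honestly surfaces an omission in the paper's own terse proof of the $\enc{\calg}$ half of the lemma.
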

\begin{remark}[Time to compute $\|a_z\|$]
We note that if $\calc$ is, say, a Reed-Muller code $\calrm(r,m)$, then indeed, given $z$, $\|a_z\|$ can be computed in time $O(1)$: if the binary expansion of $z$ has weight $t \leq r$, then the corresponding row has weight $2^{m-t}$.
For codes that may not have closed-form expressions for their generator matrices, we can pre-compute each $\|a_z\|$ (in total time $O(d^2)$) and store them to allow for $O(1)$ lookup time.\footnote{
 If a lookup table is not desirable and the $\|a_z\|$ cannot otherwise be computed on the fly, then our algorithm still works, and $\dec{\calg}$ runs in time at most
    $O(T_{\dec{\calc}}(d) + T_{\enc{\calc}}(d) + d^2),$
    where we recall that $d = \Theta(n)$ and $O(\log N)$. 
}
\end{remark}

\begin{proof}[Proof of Lemma~\ref{lem:running time}]
    As we are assuming that $D(\calc) = o(n)$, finding the $b_\ell$ also takes time $o(n)$ and is negligible.  Among the other steps, the only non-tivial ones are running the encoding and decoding maps for $\calc$ (each of which happens $O(1)$ times); running \compr\  (which takes time $O(k) = O(d)$ if $\|a_z\|$ can be computed in time $O(1)$); and running $\calr_k$ and $\calr_k^{-1}$, which can be done in time $O(k) = O(d)$.
\end{proof}

Next, we move on to the analysis of the correctness and failure probability of Algorithm~\ref{alg:Dec}.  The final statement is Theorem~\ref{thm:main} below, but we will need several lemmas first.  We begin by showing that, if all {\chunkn}s are decoded correctly, then $g_{\hat{j}}$ is equal to $g_j$ on the portion of indices where the crossover point $h_{i,\bar{j}}$ is guaranteed not to be.
\begin{lemma}\label{lem:chunks-show-up}
    Let $x = g_j +\eta $ be the noisy input to $\dec{\calg}$, where $g_j \in \calg$ and $\eta \sim \ber(p)^n$ and let $\jhat = \dec{\calg}(x).$ Assume that all {\chunkn}s are decoded correctly in Lines~\ref{line:start-small-dec} to \ref{line:end-small-dec}.
    Define $\cali\subset \sett{0,\ldots, d-1}$ as the set of indices that $h_{i,\bar{j}}$ can be equal to depending on the pattern of $(b_1,b_2,b_3)$ according to Lemma~\ref{lem:dec-g}.\footnote{That is, if $(b_1, b_2, b_3) = (0,0,1)$ or $(1,1,0)$, then $\cali = [n + D(\calc), 2n + 3D(\calc))$, and so on.} Then $g_{\jhat} [\bar{\cali}] = g_j[\bar{\cali}]$.
\end{lemma}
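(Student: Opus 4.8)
The plan is to argue by cases on the pattern $(b_1,b_2,b_3)$ computed in Line~\ref{line:get_bi}, following the three branches of Algorithm~\ref{alg:Dec} and the three parts of Lemma~\ref{lem:dec-g} (the two remaining patterns $(1,0,1),(0,1,0)$ cannot arise once all {\chunkn}s decode correctly, by Lemma~\ref{lemma:one-broken-chunk} together with the parity structure of $\calw$). In each branch I will exhibit a single \emph{reference codeword} $w^\star$ among $\{w_{i-1},w_i,w_{i+1}\}$ and show that \emph{both} $g_j$ and $g_{\jhat}$ agree with $w^\star$ on all of $\bar\cali$; this immediately yields $g_{\jhat}[\bar\cali]=g_j[\bar\cali]$. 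For $g_j$ this is essentially Lemma~\ref{lem:dec-g}: the observed pattern forces the crossover point $h_{i,\bar j}$ into $\cali$, and since $g_j$ equals $w_{i+1}$ on $[0,h_{i,\bar j}]$ and $w_i$ on $(h_{i,\bar j},d)$, the complement $\bar\cali$ lies entirely on one side of the crossover; e.g.\ in Case~1 we have $\cali=[n+D(\calc),d)$, $\bar\cali=[0,n+D(\calc))$, and $h_{i,\bar j}\ge n+D(\calc)$, so $g_j$ agrees with $w^\star:=w_{i+1}$ on $\bar\cali$, while in Case~3 Lemma~\ref{lem:dec-g} also says whether $h_{i,\bar j}\in[0,D(\calc))$ (so $\bar\cali=[D(\calc),d-D(\calc))$ sits on the $w_i$ side and $w^\star:=w_i$) or $h_{i,\bar j}\in[d-D(\calc),d)$ (so $w^\star:=w_{i+1}$).

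The real work is to control $g_{\jhat}$. Under the hypothesis, Lemma~\ref{lem:dec-g} pins down which of $\tilde c_1(g_j),\tilde c_2(g_j)$ is a full $\calc$-codeword and what it equals, so the quantity $\hat\iota=\calr_k^{-1}(\cdot)$ computed in the active branch is exactly what its comment claims ($i+1$ in Case~1, $i$ in Case~2, $i$ or $i+1$ in Case~3), and hence $\chat_1,\chat_2,\what_1,\what_2$ (or $\what$) are the intended codewords of $\calc$ and $\calw$. Combining Lemma~\ref{lem:computer} (that $\compr$ returns the right $r_{(\cdot)}$) with the single fact that $\dec{\calu}$ always outputs a value in $\{0,\dots,(\text{length})\}$ --- I do \emph{not} assume the unary decodings succeed --- I read off from the defining formula for $\jhat$ (or $\jhat_1,\jhat_2$) that $\jhat$ lies in the expected block $[r_{i'},r_{i'+1})$ and that $\overline{\jhat}:=\jhat-r_{i'}$ (in the notation of Definition~\ref{def:bar}) lies in an explicit sub-window of $\{0,\dots,\Delta(w_{i'},w_{i'+1})\}$. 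For instance, in Case~1 one gets $\overline{\jhat}=u+D(\calc)+\Delta(c_i,c_{i+1})$ with $u\in\{0,\dots,2D(\calc)+\Delta(c_i,c_{i+1})-1\}$, so $g_{\jhat}$ is an interpolant of $(w_i,w_{i+1})$ whose crossover point $\hat h=h_{i,\overline{\jhat}}$ is at least $h_{i,\,D(\calc)+\Delta(c_i,c_{i+1})}$, i.e.\ at least the last coordinate on which $w_i$ and $w_{i+1}$ differ inside $\tilde c_1$.

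The key (and only mildly delicate) structural point is that such one-sided control on $\hat h$ suffices. Because $w_i$ and $w_{i+1}$ have \emph{no} differing coordinate strictly between that last differing coordinate inside $\tilde c_1$ and $n+D(\calc)$ (the left endpoint of $\cali$), the bound $\hat h\ge h_{i,\,D(\calc)+\Delta(c_i,c_{i+1})}$ forces $\hat h$ either to equal that coordinate of $\tilde c_1$ or to be $\ge n+D(\calc)$; in either situation $g_{\jhat}$ equals $w^\star=w_{i+1}$ on all of $\bar\cali$ --- in the first situation because the coordinates of $\bar\cali$ lying past $\hat h$ are precisely ones on which $w_i$ and $w_{i+1}$, hence $g_{\jhat}$ and $w^\star$, already coincide. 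Cases~2 and 3 go the same way, with prefix/suffix (and $w_i$ versus $w_{i+1}$, and for the candidate $\jhat_2$ of Case~3 the block $[r_{i-1},r_i)$ in place of $[r_i,r_{i+1})$) interchanged; in Case~3 one checks the conclusion for \emph{both} $\jhat_1$ and $\jhat_2$, so the $\arg\min$ in Line~\ref{line:case3-jhat} cannot pick a bad index.

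The main obstacle is not a single hard inequality but the bookkeeping: tracking how the coordinates of $h_{i'}$ are distributed among $s_1,\tilde c_1,s_2,\tilde c_2,s_3$, translating each branch's arithmetic into the correct window for $\overline{\jhat}$, and disposing of the degenerate endpoints --- e.g.\ $\jhat=r_{i'}$, where $g_{\jhat}=w_{i'}$ outright, and, in Case~3 with $i=0$, the candidate $\jhat_2$ whose formula references $\calr_k(-1)$ and $\compr(-1)$, which is precisely the candidate that gets discarded, so one simply notes the $\arg\min$ still returns the good index $\jhat_1$.
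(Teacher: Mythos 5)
Your proof is correct and follows the same basic strategy as the paper's: identify $\bar\cali$ as a union of full chunks, note that correct decoding of those chunks pins down what $g_{\hat\jmath}$ must look like there, and conclude $g_{\hat\jmath}[\bar\cali]=g_j[\bar\cali]$. The paper's own proof, however, is only a three-sentence sketch — it asserts that ``the decoder fixes the values of these indices and only estimates the values of the rest of the bits'' and stops there. Your version actually carries out the verification that this assertion requires: you trace the arithmetic in each branch to show that $\hat\jmath$ lands in the intended block $[r_{i'},r_{i'+1})$ and that the crossover point $h_{i',\overline{\hat\jmath}}$ of $g_{\hat\jmath}$ is forced past (or to the boundary of) $\bar\cali$, and you correctly handle the genuinely delicate sub-case where that crossover sits at the last differing coordinate inside $\tilde c_1$ (or $\tilde c_2$) rather than cleanly inside $\cali$ --- observing that the remaining positions of $\bar\cali$ beyond the crossover are non-differing coordinates of $w_{i'}$ and $w_{i'+1}$, so $g_{\hat\jmath}$ still matches the reference codeword there. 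You also flag an edge case the paper silently ignores (Case 3 with $\hat\iota=0$, where $\calr_k(-1)$ and $\compr(-1)$ are formally out of range); your handling of it (that the unused candidate $\hat\jmath_2$ never gets selected) is a reasonable reading of the authors' intent, though strictly speaking the algorithm as written does not guard against evaluating those expressions. In short: same approach, but your write-up supplies the bookkeeping the paper leaves implicit.
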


\begin{proof}
    First notice that the indices in $\bar{\cali}$ are indices corresponding to a subset $S\subseteq \sett{s_1,s_2,s_3, \tilde{c}_1, \tilde{c}_2}$. As $h_{i,j}\neq \bar{\cali}$ then all chunks in $S$ are {\chunkn}s. Given that full chunks are decoded correctly, then we know that for $\tilde{c}_i\in S$, $\enc{\calc}(\vhat_i) = \tilde{c}_i(g_j)$ and for $s_i \in S $ we have that $b_i^{D(\calc)} = s_i(g_j)$. As a result the decoder fixes the values of these indices and only estimates the values of the rest of the bits in lines~\ref{line:est1}, \ref{line:est2}, \ref{line:est3}, and \ref{line:est4}. Thus, $g_j[\bar{\cali}] = g_{\jhat}[\bar{\cali}]$.
\end{proof}
\begin{lemma}\label{lem:alg-case12}
    For a $i\in \sett{0,\ldots, 2^k-2}$ let $j$ be such that $j\in [r_i ,r_{i+1})$ and  $ h_{i,\bar{j}}\in [ D(\calc) ,d-D(C))$. Let $x= g_j + \eta $ be the noisy input for $\dec{\calg}$ and $\jhat$ be the estimate given by $\dec{\calg}$. Assuming that all {\chunkn}s are decoded correctly, then $ \jhat \in [r_i, r_{i+1})$, Moreover, $|j- \jhat| = \Delta (g_j, g_{\jhat})$.
\end{lemma}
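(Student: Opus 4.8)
The plan has three parts: first I would pin down which branch of Algorithm~\ref{alg:Dec} runs under the stated hypotheses; then I would bound $\jhat$ inside that branch using only the output range of the unary decoder; and finally I would read the distance identity off the structure of $\calg$ inside a single block $[r_i, r_{i+1})$.

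\textbf{Which branch.} I would first observe that the hypothesis $h_{i,\bar j}\in[D(\calc),d-D(\calc))$ forces $\bar j\geq 1$ (so $h_{i,\bar j}$ is defined) and places the crossover point inside $\tilde c_1(g_j)$, $s_2(g_j)$, or $\tilde c_2(g_j)$. In each of these three situations $s_1(g_j)$ lies entirely to the left of the crossover and $s_3(g_j)$ entirely to the right, so both are {\chunkn}s with $s_1(g_j)$ agreeing with $w_{i+1}$ and $s_3(g_j)$ with $w_i$; as $i$ and $i+1$ have opposite parity, $s_1(g_j)\neq s_3(g_j)$, so the assumed correct decoding of {\chunkn}s yields $b_1\neq b_3$. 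The four triples $(b_1,b_2,b_3)$ with $b_1\neq b_3$ are exactly those handled by Cases~1 and~2 of the algorithm, so one of those branches runs. (The value of $b_2$, which may be unconstrained when the crossover falls inside $s_2$, plays no role here.)

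\textbf{Bounding $\jhat$.} Say Case~1 runs, so $(b_1,b_2,b_3)\in\{(1,1,0),(0,0,1)\}$. From case~1 of Lemma~\ref{lem:dec-g} I get $\enc{\calc}(\vhat_1)=c_{i+1}$; since $c_{i+1}=\calr_k(i+1)^T A_\calc$ and $A_\calc$ has full rank, $\vhat_1=\calr_k(i+1)$, hence $\hat{\iota}=\calr_k^{-1}(\vhat_1)=i+1$ (well defined since $i\leq 2^k-2$). Tracing the algorithm, $\chat_1=c_i$, $\chat_2=c_{i+1}$, and $H'=\{\ell\in h_i : \ell\geq n+D(\calc)\}$. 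The coordinates occurring in $h_i$ are the $3D(\calc)$ coordinates of $s_1,s_2,s_3$ plus the $\Delta(c_i,c_{i+1})$ coordinates on which $c_i$ and $c_{i+1}$ differ inside each of $\tilde c_1$ and $\tilde c_2$, minus the final differing coordinate (which lies in $s_3$); of these, the ones of index at least $n+D(\calc)$ are those in $s_2$, in $\tilde c_2$, and in $s_3$ without the last one, so $|H'|=2D(\calc)+\Delta(c_i,c_{i+1})-1$. Since $u=\dec{\calu}(\cdot)\in\{0,\ldots,|H'|\}$ by definition of the unary decoder, and $\compr(\hat{\iota}-1)=\compr(i)=r_i$ by Lemma~\ref{lem:computer},
\[
r_i \;\leq\; \jhat \;=\; u + D(\calc) + \Delta(\chat_1,\chat_2) + r_i \;\leq\; |H'| + D(\calc) + \Delta(c_i,c_{i+1}) + r_i \;=\; r_{i+1}-1,
\]
using $r_{i+1}-r_i=\Delta(w_i,w_{i+1})=2\Delta(c_i,c_{i+1})+3D(\calc)$; thus $\jhat\in[r_i,r_{i+1})$. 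Case~2 is symmetric: case~2 of Lemma~\ref{lem:dec-g} gives $\enc{\calc}(\vhat_2)=c_i$, so $\hat{\iota}=i$; the set $H'=\{\ell\in h_i : \ell<n+2D(\calc)\}$ picks out the coordinates of $s_1$, the differing coordinates of $\tilde c_1$, and the coordinates of $s_2$, so $|H'|=2D(\calc)+\Delta(c_i,c_{i+1})$; and $\jhat=u+\compr(i)=u+r_i$, giving $r_i\leq\jhat\leq r_i+2D(\calc)+\Delta(c_i,c_{i+1})<r_{i+1}$.

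\textbf{The distance identity.} Once $\jhat\in[r_i,r_{i+1})$ is known, the ``moreover'' comes for free: by Definition~\ref{def:calg}, for $0\leq t\leq\Delta(w_i,w_{i+1})-1$ the codeword $g_{r_i+t}$ agrees with $w_i$ off the distinct coordinates $h_{i,1},\ldots,h_{i,t}$, so $\Delta(g_{r_i+t},g_{r_i+t'})=|t-t'|$ for any such $t,t'$; taking $t=j-r_i$ and $t'=\jhat-r_i$ yields $\Delta(g_j,g_{\jhat})=|j-\jhat|$. I expect the only genuinely delicate point to be the exact size of $H'$ in the second part — in particular tracking the final differing coordinate of $w_i$ and $w_{i+1}$, which is dropped from $h_i$ — together with reading the Case-2 definition of $H'$ in Algorithm~\ref{alg:Dec} as $\{\ell\in h_{\hat{\iota}} : \ell<n+2D(\calc)\}$ with $\hat{\iota}=i$, as its comment indicates. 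Everything else follows directly from Lemmas~\ref{lem:dec-g} and~\ref{lem:computer} and the definitions; in particular one never needs $\what_1$ or $\what_2$ to be the ``correct'' codeword of $\calw$, only that $\dec{\calu}$ returns a value in $\{0,\ldots,|H'|\}$.
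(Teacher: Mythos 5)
Your proof is correct and follows essentially the same route as the paper: show $s_1$ and $s_3$ are full chunks with opposite values to force Case 1 or 2, use Lemma~\ref{lem:dec-g} and Lemma~\ref{lem:computer} to pin down $\hat{\iota}$ and $\compr(\cdot)$, bound $\hat{j}-r_i$ by the output range of $\dec{\calu}$ on $H'$, and read the distance identity off the unary structure within $[r_i,r_{i+1})$ (as in Observation~\ref{obs:unary}). Your extra bookkeeping on $|H'|$ (the $-1$ from the dropped last coordinate of $h_i$) and your remark that the Case-2 line of Algorithm~\ref{alg:Dec} should read $h_{\hat{\iota}}$ rather than $h_{\hat{\iota}-1}$ to match its own comment are both accurate and fill in details the paper leaves implicit under ``Case 2 is symmetric.''
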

\begin{proof}

We first claim that either Case 1 or Case 2 of Lemma~\ref{lem:dec-g} has occurred.  Indeed, 
the fact that $h_{i,\bar{j}} \in [D(\calc), d - D(\calc))$ implies that $s_1(g_j)$ and $s_3(g_j)$ are both {\chunkn}s, and our assumption that each \chunk is correctly decoded implies that $s_1(g_j) = s_1(w_{i+1})$ while $s_3(g_j) = s_3(w_{i})$.  As $i$ and $i+1$ have different parities, $s_1(g_j) \neq s_3(g_j)$, which implies that we are in Case 1 or Case 2 of Lemma~\ref{lem:dec-g}.

Next, we establish that the estimate $\hat{j}$ returned by the algorithm in Cases 1 or 2 satisfies $\hat{j} \in [r_i, r_{i+1})$.  
Suppose without loss of generality that we are in Case 1, so $(b_1, b_2, b_3) = (0,0,1)$ or $(1,1,0)$ (Case 2 is symmetric). 
We first go through Case 1 of Algorithm~\ref{alg:Dec}, which starts at Line~\ref{line:case1}.  Since we are in Case 1 of Lemma~\ref{lem:dec-g}, that lemma implies that $\enc{\calc}(\hat{v}_1) = c_{i+1}$ and that $h_{i,\bar{j}} \geq n + D(\calc)$.
Thus, in the first case in Algorithm~\ref{alg:Dec}, under the assumption that all {\chunkn}s are correctly decoded, we have $\hat{\iota} - 1 = i$, $\hat{c}_1 = c_i$,  $\hat{c}_2 = c_{i+1}$, $\hat{w}_1 = w_i$, and $\hat{w}_2 = w_{i+1}$.
At the end of this case, the final estimate $\hat{j}$ is set to be 
\[ \hat{j} = u + D(\calc) + \Delta(\hat{c}_1, \hat{c}_2) + \compr(\hat{\iota} - 1).\]

By the above, we have $\hat{\iota}-1 = i$, so by Lemma~\ref{lem:computer}, $\compr(\hat{\iota} - 1) = r_i$.  
Note also that 
$\Delta(\calc) = \Delta(s_1(w_i), s_1(w_{i+1}))$.
Plugging in to our expression for $\hat{j}$ and subtracting $r_i$ from both sides, we have
\begin{equation}\label{eq:hatjdiff} \hat{j} - r_i = u + \Delta(s_1(w_i), s_1(w_{i+1})) + \Delta(\tilde{c}_1(w_i), \tilde{c}_1(w_{i+1})).
\end{equation}

Now, recall that in Algorithm~\ref{alg:Dec}, we have $u = \dec{\calu}(x[H']+ \hat{w}_1[H'])$, 
where $H'$ is the set of $\ell$ appearing in $h_i$ so that $\ell \geq n + D(\calc)$.  
It thus follows from the definition that 
\[ u \leq |H'| < \Delta(w_i[n + D(\calc):] , w_{i+1}[n + D(\calc):]), \]
from the definition of $H'$.
Plugging this into \eqref{eq:hatjdiff}, we see that
$\hat{j} - r_i < \Delta(w_i, w_{i+1})$,
which implies that
$\hat{j} \in [r_i, r_{i+1})$,
as desired.

Finally, we argue that $|\hat{j} - j| = \Delta(g_j, g_{\hat{j}})$.  Indeed, we may write
\[ g_j + g_{\hat{j}} = (g_j + w_i) + (g_{\hat{j}} + w_i) = (g_j + w_i)[h_i] + (g_{\hat{j}} + w_i)[h_i].\]
By Observation~\ref{obs:unary}, $(g_j + w_i)[h_i] = \enc{\calu}(j)$; and by that observation along with the fact that $\hat{j} \in [r_i, r_{i+1})$, we also have $(g_{\hat{j}} + w_i)[h_i] = \enc{\calu}(\hat{j}). $
Thus, 
\[ \|g_j + g_{\hat{j}}\| = \|(g_j + w_i)[h_i] + (g_{\hat{j}} + w_i)[h_i]\| = \| \enc{\calu}{j} + \enc{\calu}{\hat{j}}\| = |j - \hat{j}|,\]
which finishes the proof of the lemma.

\end{proof} 
\begin{lemma}
\label{lem:alg-case3-dist}
    For $i\in \sett{0,\ldots, 2^k-2}$, let $j$ be such that $j\in [r_i, r_i+D(\calc)) \cup [r_{i+1} -D(\calc), r_{i+1})$. Further  let $x = g_j + \eta $ be the noisy input and  $\jhat$ be the estimate obtained from $\dec{\calg}$. Assuming that all {\chunkn}s are decoded correctly, then either
    \begin{enumerate}
        \item $\jhat \in [r_i, r_{i+1})$ and $\Delta(g_j, g_{\jhat}) = |j - \jhat|$; or
        \item $\jhat \in [r_{i+1} , r_{i+2}) \cup [r_{i-1}, r_{i})$ and $|j-\jhat| \leq 2D(\calc)$ and  $\Delta(g_j, g_{\jhat}) = |j-\jhat|$.
    \end{enumerate}
\end{lemma}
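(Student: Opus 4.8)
The plan is to work out, from the location of the crossover point $h_{i,\bar j}$, which branch of Algorithm~\ref{alg:Dec} is executed, and then to track $\hat j$ through that branch. I would split the argument into two mirror-image sub-cases --- $j$ near the left end of its block, $j\in[r_i,r_i+D(\calc))$, and $j$ near the right end, $j\in[r_{i+1}-D(\calc),r_{i+1})$ --- treating the left one in detail and obtaining the right one by symmetry. The first step is to pin down $h_{i,\bar j}$: from the layout of $w_i,w_{i+1}$ in Definition~\ref{def:calw}, the coordinates on which $w_i$ and $w_{i+1}$ disagree, read left to right, are all of $s_1$, then the coordinates of $\tilde{c}_1$ where $c_i\ne c_{i+1}$, then all of $s_2$, then the matching coordinates of $\tilde{c}_2$, then all of $s_3$. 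Hence $j\in[r_i,r_i+D(\calc))$ forces $g_j=w_i$ (if $j=r_i$) or $h_{i,\bar j}\in[0,D(\calc))$, and $j\in[r_{i+1}-D(\calc),r_{i+1})$ forces $h_{i,\bar j}\in[d-D(\calc),d)$, with the single exception $\bar j=\Delta(w_i,w_{i+1})-D(\calc)$, for which $h_{i,\bar j}$ is the last differing coordinate of $\tilde{c}_2$ and so lies in $[D(\calc),d-D(\calc))$; that lone index is covered directly by Lemma~\ref{lem:alg-case12}, which gives conclusion~(1), so it can be set aside.

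For the left sub-case I would first read off the branch. Since the crossover $h_{i,\bar j}$ lies in $s_1$ (or $g_j=w_i$), all coordinates of $s_2,s_3,\tilde{c}_1,\tilde{c}_2$ lie past it, so $g_j$ agrees with $w_i$ there: $s_2(g_j),s_3(g_j)$ are full chunks equal to the constant block of $w_i$, and $\tilde{c}_1(g_j)=\tilde{c}_2(g_j)=c_i$ are full chunks. Under the hypothesis that all full chunks decode correctly, $b_2=b_3$ is the parity of $i$ and $\vhat_1,\vhat_2$ both decode to the message $\calr_k(i)$ of $c_i$. By Lemma~\ref{lem:dec-g} the decoder therefore runs Case~2 when $b_1\ne b_2$ and Case~3 when $b_1=b_2$, and in both it computes $\hat\iota=i$. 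In the Case~2 branch one has $\chat_1=c_i$, $\chat_2=c_{i+1}$, $\what_1=w_i$, and $H'$ is the set of differing coordinates of $w_i,w_{i+1}$ lying before coordinate $n+2D(\calc)$, which contains all of $s_1$; since the first $\bar j<D(\calc)$ differing coordinates overall are exactly the first $\bar j$ coordinates of $H'$, we get $x[H']+\what_1[H']=\enc{\calu}(\bar j)+(\text{noise on }H')$ in the length-$|H'|$ unary code, so $u\in\{0,\dots,|H'|\}$ and $\hat j=u+\compr(i)=u+r_i$ by Lemma~\ref{lem:computer}. Because $|H'|=2D(\calc)+\Delta(c_i,c_{i+1})<\Delta(w_i,w_{i+1})$, this gives $\hat j\in[r_i,r_{i+1})$, and $\Delta(g_j,g_{\hat j})=|j-\hat j|$ then follows from Observation~\ref{obs:unary}: $g_j+g_{\hat j}$ is supported on $h_i$ and equals $\enc{\calu}(\bar j)+\enc{\calu}(\hat j-r_i)$ there, of weight $|\bar j-(\hat j-r_i)|=|j-\hat j|$. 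This is conclusion~(1).

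In the Case~3 branch, using $b_1=b_2=b_3=$ parity of $i$, the two unary subproblems simplify to $x[<D(\calc)]+b_1^{D(\calc)}=\enc{\calu}(\bar j)+\text{noise}$ and $x[>2D(\calc)+2n]+\bar{b}_1^{D(\calc)}=\enc{\calu}(D(\calc))+\text{noise}$, so $u_1,u_2\in\{0,\dots,D(\calc)\}$, and hence (for $i\ge1$) $\hat j_1=u_1+\compr(i)=u_1+r_i\in[r_i,r_i+D(\calc)]\subseteq[r_i,r_{i+1})$, while $\hat j_2=u_2+2D(\calc)+2\Delta(c_{i-1},c_i)+\compr(i-1)=u_2+(r_i-D(\calc))\in[r_i-D(\calc),r_i]$. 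Whichever of $\hat j_1,\hat j_2$ the algorithm keeps: if the kept value lies in $[r_i,r_{i+1})$ --- always for $\hat j_1$, and for $\hat j_2$ exactly when $\hat j_2=r_i$ --- Observation~\ref{obs:unary} again gives conclusion~(1); and if $\hat j_2\in[r_i-D(\calc),r_i)\subseteq[r_{i-1},r_i)$, then $|j-\hat j_2|<2D(\calc)$, and for the distance identity I would argue that the Gray-code path from $g_{\hat j_2}$ to $g_j$ runs through $g_{r_i}=w_i$, flipping on the leg $g_{\hat j_2}\to w_i$ only differing coordinates of $w_{i-1},w_i$ among the last $D(\calc)$ of them --- all inside the block $s_3$ --- and on the leg $w_i\to g_j$ only the first $\bar j<D(\calc)$ differing coordinates of $w_i,w_{i+1}$ --- all inside the block $s_1$; since $s_1$ and $s_3$ occupy disjoint coordinate ranges, no coordinate is flipped twice, so $\Delta(g_j,g_{\hat j_2})$ equals the path length $(r_i-\hat j_2)+(j-r_i)=|j-\hat j_2|$. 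This is conclusion~(2).

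The right sub-case is the mirror image: exchange $s_1\leftrightarrow s_3$ and $w_i\leftrightarrow w_{i+1}$, replace the neighbouring block $[r_{i-1},r_i)$ by $[r_{i+1},r_{i+2})$, and Case~2 of Algorithm~\ref{alg:Dec} by Case~1. I expect the main obstacle to be exactly the disjoint-support step above: getting the exact equality $\Delta(g_j,g_{\hat j})=|j-\hat j|$ across a block boundary rests on verifying that the two legs of the Gray-code path through $w_i$ flip coordinates confined to the disjoint blocks $s_1$ and $s_3$; the rest is routine bookkeeping of the unary-decoding identities and interval arithmetic. A little additional care is then needed at the boundary index $i=0$ --- where $\compr(\hat\iota-1)$ would reference a nonexistent block, so only $\hat j_1$ is formed and the $[r_{i-1},r_i)$ alternative of conclusion~(2) is vacuous --- and at the degenerate value $j=r_i$ (where $g_j=w_i$); both go through with the obvious conventions.
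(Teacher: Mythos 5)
Your proposal is correct and follows the same essential strategy as the paper's proof: determine which branch of Algorithm~\ref{alg:Dec} runs given the location of $h_{i,\bar{j}}$, track the estimates $\hat{j}$ (resp.\ $\hat{j}_1,\hat{j}_2$) through that branch using Lemma~\ref{lem:computer} and Observation~\ref{obs:unary}, and for the cross-boundary case establish $\Delta(g_j,g_{\hat{j}})=|j-\hat{j}|$ by noting that the Gray-code path through $w_i$ flips bits confined to the disjoint blocks $s_1$ and $s_3$. The organization differs slightly --- you split by $j$'s location (left vs.\ right edge of the block) and then by algorithm branch, whereas the paper splits first by the case of Lemma~\ref{lem:dec-g} and then by whether Case~3 ``guesses correctly'' --- but the content and key computations (e.g.\ $\hat{j}_2=u_2+r_i-D(\calc)$, which is the paper's Claim~\ref{claim:ABC} part~C) are identical. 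Your handling of the edge index $\bar{j}=\Delta(w_i,w_{i+1})-D(\calc)$ falling back on Lemma~\ref{lem:alg-case12}, and your note about the $i=0$ boundary, are nice explicit touches that the paper leaves implicit.
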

\begin{proof}
Unlike in Lemma~\ref{lem:alg-case12}, it is now the case that any of the three cases in Lemma~\ref{lem:dec-g} could occur.  We first consider Cases 1 and 2, so $(b_1,b_2,b_3)\notin \sett{(0,0,0), (1,1,1)}$ in line~\ref{line:get_bi}.  The proof is quite similar to that of Lemma~\ref{lem:alg-case12}, so we just sketch it here.  
Suppose that we are in Case 1 of Lemma~\ref{lem:dec-g} (Case 2 is symmetric).  In Case 1, we claim that $\hat{j} \in [r_i, r_{i+1})$.  Indeed, since all {\chunkn}s are decoded correctly, as we argued in the proof of Lemma~\ref{lem:alg-case12}, the values $\hat{\iota}, \hat{v}, \hat{c}_\ell, \hat{w}_\ell$ are computed correctly, meaning that they match the values that the comments in Algorithm~\ref{alg:Dec} say they should be.  In particular, as before we have
\begin{align*}
 \hat{j} &= u + D(\calc) + \Delta(\tilde{c}_1, \tilde{c}_2) + \compr(\hat{\iota} - 1) \\
 &= u + D(\calc) + \Delta(c_i, c_{i+1}) + r_i \\
 & < \Delta(w_i, w_{i+1}) + r_i \\
 &<r_{i+1}.
 \end{align*}
 This establishes that $\hat{j} \in [r_i, r_{i+1})$, which proves the claim; the rest of this case follows identically to the proof of Lemma~\ref{lem:alg-case12}.

Next we consider Case 3, so  $(b_1,b_2,b_3)\in \sett{(0,0,0), (1,1,1)}$. In this case,  Lemma~\ref{lem:dec-g} (and the assumption that all {\chunkn}s are correctly decoded)
says that $\enc{\calc}(\hat{v}_1) = \enc{\calc}(\hat{v}_2) \in \{c_i, c_{i+1}\}$, which
implies that the value of $\what$ computed in line~\ref{line:case3-what} is either $w_i$ or $w_{i+1}$.
Algorithm~\ref{alg:Dec} then computes two estimates $\jhat_1$ and $\jhat_2$, which are meant to be an estimate of $j$ in the two cases the $\what = w_i$ and $\what = w_{i+1}$; eventually it picks whichever $\hat{j}_\ell$ produces a codeword closer to the received word $x$.

There are two main sub-cases.  In the first, the algorithm guesses correctly, meaning that either $\hat{j} = \hat{j}_1$ and $\hat{w} = w_i$; or that $\hat{j} = \hat{j}_2$ and $\what = w_{i+1}$.
In the other case, the algorithm guesses incorrectly, meaning that $\hat{j} = \hat{j}_1$ and $\hat{w} = w_{i+1}$, or $\hat{j} = \hat{j}_2$ and $\what = w_i$.  We consider each of these in turn.

First suppose that the algorithm guesses correctly.  This case is again quite similar to that of Lemma~\ref{lem:alg-case12}, and we sketch the proof.  Suppose that $\hat{j} = \hat{j}_1$ and $\hat{w} = w_i$; the other way of ``guessing correctly'' leads to a similar argument.
Now, we claim that $\hat{j} \in [r_i, r_{i+1})$.  To see this, notice that in this case, we have
\[ \hat{j} = \hat{j}_1 = u_1 + \compr(\hat{\iota})\]
in Line~\ref{line:case3-j1}.  Since $\what = w_i$, given our assumption that all {\chunkn}s are correctly decoded, it is not hard to see that $\hat{\iota} = i$.  Lemma~\ref{lem:computer} then implies that $\compr(\hat{\iota}) = r_i$, so 
\[ \hat{j} = u_1 + r_i \leq D(\calc) + r_i < \Delta(w_i, w_{i+1}) + r_i = w_{i+1}.\]
This shows that $\hat{j} \in [w_i, w_{i+1})$.  Once we have this, the rest of the argument follows as in Lemma~\ref{lem:alg-case12}.

Now we move onto the second sub-case of Case 3, when the algorithm guesses incorrectly.  Unlike the previous cases we have considered, this is different than Lemma~\ref{lem:alg-case12}, because $\hat{j}$ may end up outside of $[r_i, r_{i+1})$.  Without loss of generality, suppose that $\what = w_i$ but that $\hat{j}$ has been set to $\hat{j}_2$. 
(The other case is similar).  
\begin{claim}\label{claim:ABC}
In this sub-case, the following hold.
    \begin{itemize}
        \item[A.] $\hat{j} \in [r_{i-1}, r_i)$
        \item[B.] $j < r_i + D(\calc)$
        \item[C.] $\hat{j} \geq r_{i-1} + 2\Delta(c_i, c_{i-1}) + 2D(\calc) = r - D(\calc).$
    \end{itemize}
\end{claim}
\begin{proof}
    We begin with B.  
    First, since $\hat{w} = w_i$, this implies that $\tilde{c}_1(g_j) = \tilde{c}_2(g_j) = w_i$, so Lemma~\ref{lem:dec-g} (Case 3) implies that $h_{i,\bar{j}} \in [0, D(\calc))$.  Then since $\bar{j} \leq h_{i,\bar{j}} \leq D(\calc)$, we have
    \[ j = \bar{j} + r_i \leq r_i + D(\calc).\]
    This proved B.

    Next we prove C.  This follows from the computation of $\hat{j}_2$ in Algorithm~\ref{alg:Dec}, along with the assumption that all {\chunkn}s are decoded correctly.  Indeed, we have
    \[ \hat{j} = \hat{j}_2 = u_2 + 2D(\calc) + 2\Delta(\hat{c}_1, \hat{c}_2) + \compr(\hat{\iota}-1).\]
    Since $\hat{w} = w_i$, we are in the case where
    $\hat{\iota} = i$, $\hat{c}_1 = c_{i-1}$, $\hat{c}_2 = c_i$, and the above implies that
    \[ \hat{j} = u_2 + 2D(\calc) + 2\Delta(c_{i-1}, c_i) + r_{i-1}.\]
    The fact that $u_2 \geq 0$ proves inequality in part C.  The equality in part C follows since by the definition of $\calw$, we have
    \[ 2\Delta(c_i,c_{i-1}) + 2D(\calc) = \Delta(w_i, w_{i-1}) - D(\calc).\]

    Finally, we move onto A.  The fact that $\hat{j} \geq r_{i-1}$ follows immediately from C.  The fact that $\hat{j} < r_i$ follows from the fact that, by a computation similar to that above, we have
    \[ \hat{j} = \hat{j}_2 = u_2 + r_i - D(\calc),\]
    which is less than $r_i$ as $u_2 < D(\calc)$.
\end{proof}

Given the claim, we can finish the proof of the lemma in this sub-case.  First, we observe that $|j - \jhat| \leq D(\calc)$; indeed this follows directly from B and C in the claim.

Finally, we show that $\Delta(g_j, g_{\hat{j}}) = |j - \hat{j}|$.  To see this, we first write
\[ \|g_j + g_{\hat{j}} \| = \|(g_j + w_i) + (g_{\hat{j}} + w_i)\|.\]
We claim that $g_j$ and $w_i$ differ on only the indices in $[0, D(\calc))$.  This follows from the fact that $h_{i,\bar{j}} \in [0, D(\calc))$, which we saw in the proof of Claim~\ref{claim:ABC} (part B).  Next, we claim that $g_{\hat{j}}$ and $w_i$ differ only on the indices in $[d - D(\calc), d)$.  Indeed, part C of Claim~\ref{claim:ABC} implies that $r_i - \hat{j} \leq D(\calc)$.  Since $\hat{j} \in [r_{i-1}, r_i)$ (part A of Claim~\ref{claim:ABC}), this means that $\hat{j}$ is in the last chunk (the $s_3$ chunk) of $[r_{i-1}, r_i)$, which proves the claim.
Thus, we have that
\[ \|(g_j + w_i) + (g_{\hat{j}} + w_i)\| = \|g_j + w_i\| + \|g_{\hat{j}} + w_i\|, \]
as the two parts differ on disjoint sets.  Moreover, since $s_1(w_i) = \overline{s_1(w_{i+1})}$, we have $\|g_j + w_i\| = (j-r_i)$, since $w_{i+1}$ and $w_i$ differ on \emph{all} of the first $D(\calc)$ bits, so $g_j$ and $w_i$ differ on all of the first $j-r_i$ bits.  Similarly, $\|g_{\hat{j}} + w_i\| = r_i - \hat{j}$.  Putting everything together, we conclude that 
\[ \Delta(g_j, g_{\hat{j}}) = (j-r_i) + (r_i - \hat{j}) = j - \hat{j}.\]
Since $\hat{j} < j$ in this case (as $\hat{j} \in [w_{i-1}, w_i)$, while $j \in [w_i, w_{i+1})$, this proves the last component of the lemma.
 
\end{proof}

The following lemma is included in~\cite{LP24}.  We include a short proof for completeness.
\begin{lemma}[\cite{LP24}]\label{lem:pfail}
    Let $\calc \subseteq \mathbb{F}_2^n$ be a linear code with message length $k$ and minimum distance $D(\calc)$. Further let $ p \in (0,1/2)$ and $ \eta_p \sim \ber(p)^{D(\calc)}$. Then 
    \begin{equation}\label{eq:lhs}
        \Pr\left[ \| \eta_p \| > \frac{D(\calc)}{2}\right] + \frac{1}{2} \Pr\left[\|\eta_p\| = \frac{D(C)}{2}\right] \leq \pfail(\calc).
    \end{equation}
    
\end{lemma}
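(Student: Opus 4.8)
The plan is to run the classical ``two-codeword'' argument: a linear code has two codewords at Hamming distance exactly $D(\calc)$, and no decoder can decode \emph{both} of them reliably over $\mathrm{BSC}(p)$, because the two corresponding received-word distributions are close in total variation. Concretely, since $\calc$ is linear, $0^n \in \calc$, and there are $v_1, v_2 \in \F_2^k$ with $\enc{\calc}(v_1) = 0^n$ and $\Delta(\enc{\calc}(v_1), \enc{\calc}(v_2)) = D(\calc)$; write $c^* := \enc{\calc}(v_2)$ and let $S \subseteq \{0,\ldots,n-1\}$ be its support, so $|S| = D(\calc)$. For a message $v$, write $P_e(v) := \Pr_{\eta \sim \ber(p)^n}[\dec{\calc}(\enc{\calc}(v) + \eta) \neq v]$. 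By Definition~\ref{def:prob-fail}, $\pfail(\calc) \geq \max\{P_e(v_1), P_e(v_2)\} \geq \tfrac12(P_e(v_1) + P_e(v_2))$, so it suffices to show $P_e(v_1) + P_e(v_2) \geq 2\Pr[\|\eta_p\| > D(\calc)/2] + \Pr[\|\eta_p\| = D(\calc)/2]$.

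The second step is the two-codeword inequality itself. Let $A := \{x \in \F_2^n : \dec{\calc}(x) = v_1\}$ be the decoding region of $v_1$. When $v_1$ is transmitted the received word is $\eta$, so $P_e(v_1) = \Pr[\eta \notin A]$. When $v_2$ is transmitted the received word is $c^* + \eta$, and the decoder necessarily errs whenever $c^* + \eta \in A$ (it then outputs $v_1 \neq v_2$), so $P_e(v_2) \geq \Pr[c^* + \eta \in A] = \Pr[\eta \in A + c^*]$. Adding,
\[
P_e(v_1) + P_e(v_2) \;\geq\; 1 - \bigl(\Pr[\eta \in A] - \Pr[\eta \in A + c^*]\bigr).
\]

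The third step bounds the defect term $\Pr[\eta \in A] - \Pr[\eta \in A + c^*]$, uniformly over all possible decoding regions $A$. Writing $q$ for the probability mass function of $\ber(p)^n$, this defect equals $\sum_{x \in A}(q(x) - q(x + c^*))$, which over all $A$ is at most $\sum_{x : q(x) > q(x + c^*)}(q(x) - q(x + c^*))$. Because $c^*$ is supported on $S$ and $p < 1/2$, a short computation shows $q(x) - q(x + c^*)$ is positive, zero, or negative exactly according to whether $\|x_S\| < D(\calc)/2$, $= D(\calc)/2$, or $> D(\calc)/2$, where $x_S$ is the restriction of $x$ to $S$. Moreover the map $x \mapsto x + c^*$ complements the coordinates in $S$, hence sends $\{x : \|x_S\| < D(\calc)/2\}$ bijectively onto $\{x : \|x_S\| > D(\calc)/2\}$; re-indexing the sum accordingly gives
\[
\sum_{x : \|x_S\| < D(\calc)/2}\bigl(q(x) - q(x + c^*)\bigr) \;=\; \Pr\bigl[\|\eta_S\| < \tfrac{D(\calc)}{2}\bigr] - \Pr\bigl[\|\eta_S\| > \tfrac{D(\calc)}{2}\bigr].
\]
Since $\eta_S \sim \ber(p)^{D(\calc)}$ has the same law as the vector $\eta_p$ in the statement, combining this with the previous display yields
\[
P_e(v_1) + P_e(v_2) \;\geq\; 1 - \Pr[\|\eta_p\| < \tfrac{D(\calc)}{2}] + \Pr[\|\eta_p\| > \tfrac{D(\calc)}{2}] \;=\; 2\Pr[\|\eta_p\| > \tfrac{D(\calc)}{2}] + \Pr[\|\eta_p\| = \tfrac{D(\calc)}{2}],
\]
and halving this is exactly \eqref{eq:lhs}.

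The only place that needs care is the third step: the bound on the defect must hold for an \emph{arbitrary} decoder $\dec{\calc}$ (not merely a minimum-distance one), which is why we maximize over decoding regions $A$ — that maximum is precisely the total variation distance between $\ber(p)^{D(\calc)}$ and its bitwise complement. Once one observes that the optimal $A$ depends only on the weight of $x_S$, and that complementation reflects that weight about $D(\calc)/2$, the exact-half-weight term in \eqref{eq:lhs} emerges automatically; everything else is bookkeeping.
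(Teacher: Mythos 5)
Your argument is correct, and it proves the lemma by a genuinely different route than the paper. The paper first passes from the given decoder $\dec{\calc}$ to the maximum-likelihood decoder $\dec{MLD}$, asserting $\pfail(\calc) \geq \max_v \Pr[\dec{MLD}(\enc{\calc}(v)+\eta)\neq v]$; that step implicitly relies on the fact that MLD minimizes the \emph{average} error probability, together with the symmetry of MLD error across codewords of a linear code on the BSC, so that for MLD the average equals the maximum. It then analyzes $\dec{MLD}$ directly: restricting the noise to the $D(\calc)$ coordinates on which two minimum-distance codewords disagree, if that restricted noise has weight strictly above $D(\calc)/2$ the other codeword is strictly closer and MLD must err, and if the weight is exactly $D(\calc)/2$ the random tie-break errs with probability at least $1/2$. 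You instead treat $\dec{\calc}$ as a black box: the inequality $\pfail(\calc) \geq \tfrac12\bigl(P_e(v_1)+P_e(v_2)\bigr)$ is an elementary ``max is at least the average of two'' and replaces the MLD-optimality step entirely, and your two-point total-variation argument lower-bounds $P_e(v_1)+P_e(v_2)$ by showing that the decoding region $A$ of $v_1$ cannot simultaneously capture much mass under both $\ber(p)^n$ and its shift by $c^*$. Maximizing the defect over all $A$ gives exactly the total variation distance between $\ber(p)^{D(\calc)}$ and its bitwise complement, and the bijection $x \mapsto x + c^*$ (which reflects the restricted weight about $D(\calc)/2$) makes the $\tfrac12\Pr[\|\eta_p\|=D(\calc)/2]$ term fall out cleanly. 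The net result is the same bound, but your version never needs to argue that MLD is a worst-case decoder --- a fact the paper invokes without proof --- and it makes transparent that the left-hand side of \eqref{eq:lhs} is precisely the Bayes error of the two-point hypothesis test between $\ber(p)^{D(\calc)}$ and its bitwise complement. Both proofs use linearity only to guarantee the existence of a pair of codewords at Hamming distance exactly $D(\calc)$.
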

\begin{proof}
  Let $\dec{MLD}$ be the maximum likelihood decoder for $\calc$.  That is, given $x \in \F_2^n$, $\dec{MLD}(x) = \mathrm{argmin}_{c \in \calc} \Delta(x,c)$.  If there are multiple codewords $c \in C$ that attain the minimum above, then $\dec{MLD}$ chooses randomly among them. Then 
    \[ \pfail(\calc) \geq \max_{v \in \F_2^k} \Pr[ \dec{MLD}(\enc{\calc}(v) + \eta_p) \neq v ] =: p_{MLD}.\]

 Fix a message $v \in \F_2^k$, and let $c_v = \enc{\calc}(v)$.  Let $v' \in \F_2^k$ be such that $\Delta( c_v, c_{v'} ) = D(\calc)$, where $c_{v'} = \enc{\calc}(v').$
Let $H_{v,v'} = \{ i \in [n] : (c_v)_i \neq (c_{v'})_i \}$ be the set on which $c_v$ and $c_{v'}$ disagree.  
Let $\eta'_p \sim \ber(p)^n$ be a noise vector, and define $\eta_p = \eta'_p[{H_{v,v'}}]$, the restriction of $\eta'_p$ to the positions indexed by $H_{v,v'}$.  Observe that $\eta_p \sim \ber(p)^{D(\calc)}$ as in the lemma statement.
Suppose that $\|\eta_p\| > D(\calc)/2$.  Then 
$ v \neq \dec{MLD}(\enc{\calc}(v)).$
On the other hand, if $\|\eta_p\| = D(\calc)/2$, then with probability at least $1/2$, 
$v \neq \dec{MLD}(\enc{\calc}(v)).$

Together, we conclude that 
\[ \pfail(\calc) \geq p_{MLD} \geq \Pr[ \|\eta_p\| > D(\calc)/2 ] + \frac{1}{2} \Pr[ \|\eta_p\| = D(\calc)/2 ].\]

\end{proof}

\begin{lemma}\label{lem:dist-2dc} Let $ \eta \sim \ber(p)^{D(\calc)}$ be a vector in $\F_2^{D(\calc)}$, for $p\in (0,1/2)$. Let $\caly$ be the repetition code of length $D(\calc)$, so that  $\enc{\caly} : \sett{0,1} \rightarrow \sett{0^{D(\calc)} , 1 ^{D(\calc)}} $. Then for any $b \in \{0,1\}$,

$$
\Pr[ \maj_{D(\calc)} (\enc{\caly}( b ) + \eta ) \neq b ] \leq \pfail(\calc),
$$
where we recall that $\maj_{D(\calc)}$ denotes the majority function.
Above, the randomness is over both the choice of $\eta$ and any randomness that $\maj_{D(\calc)}$ uses to break ties. 

\end{lemma}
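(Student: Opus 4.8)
The plan is to reduce the statement directly to Lemma~\ref{lem:pfail}. First I would unpack what it means for the majority decoder of the repetition code to fail. Since $\enc{\caly}(b) = b^{D(\calc)}$, adding the noise vector $\eta$ flips exactly the coordinates where $\eta$ is $1$; hence $\enc{\caly}(b) + \eta$ disagrees with $b$ on exactly $\|\eta\|$ of its $D(\calc)$ coordinates and agrees with $b$ on the remaining $D(\calc) - \|\eta\|$. The function $\maj_{D(\calc)}$ therefore returns the wrong bit $1-b$ precisely when $\|\eta\| > D(\calc)/2$; and in the boundary case $\|\eta\| = D(\calc)/2$ (which can occur only when $D(\calc)$ is even), the randomized tie-breaking rule of $\maj_{D(\calc)}$ outputs the wrong value with probability exactly $1/2$, independently of $\eta$.

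Combining these two observations yields the exact identity
$$
\Pr\left[ \maj_{D(\calc)}(\enc{\caly}(b) + \eta) \neq b \right] = \Pr\left[ \|\eta\| > \frac{D(\calc)}{2} \right] + \frac{1}{2}\Pr\left[ \|\eta\| = \frac{D(\calc)}{2} \right],
$$
where the probability is taken over both the draw of $\eta$ and the coin used by $\maj_{D(\calc)}$ to break ties. Note in particular that the right-hand side does not depend on the choice of $b$, so it suffices to handle a single value of $b$.

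Finally, since $\eta \sim \ber(p)^{D(\calc)}$ exactly matches the hypothesis on $\eta_p$ in Lemma~\ref{lem:pfail}, the right-hand side above is literally the left-hand side of inequality~\eqref{eq:lhs}. Applying Lemma~\ref{lem:pfail} then gives that this quantity is at most $\pfail(\calc)$, which is the desired conclusion.

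There is essentially no real obstacle here: the argument is a one-line reduction once the failure event is written out. The only point that requires a small amount of care is correctly accounting for the parity of $D(\calc)$ and the randomized tie-breaking of $\maj_{D(\calc)}$, so that one obtains an \emph{equality} that exactly coincides with the left-hand side of Lemma~\ref{lem:pfail}, rather than merely an inequality that might be off by the tie term.
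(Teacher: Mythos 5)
Your proof is correct and takes essentially the same approach as the paper: unpack the failure event of the majority decoder into the two cases $\|\eta\| > D(\calc)/2$ and the tie case $\|\eta\| = D(\calc)/2$, and then invoke Lemma~\ref{lem:pfail}. The only cosmetic difference is that you observe the reduction gives an equality while the paper is content with the inequality, but this does not change the argument.
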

\begin{proof}\label{lem:deltaone}
    Fix $b \in \{0,1\}$.  Suppose that $\maj_{D(\calc)}(\enc{\caly}(b) + \eta) \neq b$.  Then either $\|\eta\| > D(\calc)/2$, or else $\|\eta\| = D(\calc)/2$ and the random choice of $\maj_{D(\calc)}$ was incorrect, which happens with probability $1/2$.  Thus,
    \[ \Pr[ \maj_{D(\calc)}(\enc{\caly}(b) + \eta) \neq b ] \leq \Pr\left[ \|\eta\| > \frac{D(\calc)}{2} \right] + \frac{1}{2} \Pr\left[ \|\eta\| = \frac{D(\calc)}{2} \right],\]
    which by Lemma~\ref{lem:pfail} is at most $\pfail(\calc)$.
\end{proof}

Before we prove our main theorem establishing correctness of $\dec{\calg}$ with high probability (Theorem~\ref{thm:main}), we need one more concentration bound.  We use the following from \cite{LP24}.
\begin{lemma}[\cite{LP24}] \label{lem:noise-chernoff}
    Let $x_1, x_2 \in \F_2^d$. Let $\eta \sim \ber(p)^d$ for $p\in(0,1/2)$. Then
    \begin{equation}
        \Pr[\Delta (x_2, x_1 + \eta ) \leq \Delta(x_1, x_1 + \eta )] \leq \exp\left(-\frac{(1-2p)^2}{4p+2} \Delta (x_1, x_2)\right) 
    \end{equation}
\end{lemma}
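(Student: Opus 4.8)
The plan is to reduce the event in the lemma to a statement about the weight of $\eta$ on the coordinates where $x_1$ and $x_2$ disagree, and then apply a Chernoff bound. First I would set $S = \{\ell \in [d] : (x_1)_\ell \neq (x_2)_\ell\}$, so $|S| = \Delta(x_1, x_2)$. On a coordinate $\ell \notin S$ the bit $(x_1+\eta)_\ell$ contributes identically to $\Delta(x_1, x_1+\eta)$ and to $\Delta(x_2, x_1+\eta)$, so these two distances differ only in their contributions from $S$. On a coordinate $\ell \in S$ we have $(x_2)_\ell = 1 + (x_1)_\ell$, so $(x_1+\eta)_\ell$ disagrees with $(x_1)_\ell$ exactly when $\eta_\ell = 1$ and disagrees with $(x_2)_\ell$ exactly when $\eta_\ell = 0$. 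Writing $W = \|\eta[S]\|$ for the number of ones of $\eta$ within $S$, and letting $A$ denote the (common) number of disagreements outside $S$, we get $\Delta(x_1, x_1+\eta) = A + W$ and $\Delta(x_2, x_1+\eta) = A + (|S| - W)$. Hence the event $\Delta(x_2, x_1+\eta) \le \Delta(x_1, x_1+\eta)$ is precisely the event $W \ge |S|/2$.

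It then remains to bound $\Pr[W \ge |S|/2]$ where $W \sim \mathrm{Bin}(|S|, p)$. If $\Delta(x_1, x_2) = 0$ the claimed inequality is trivial (both sides equal $1$), so assume $|S| \ge 1$. Since $p < 1/2$, we have $|S|/2 > \mathbb{E}[W] = p|S|$, so this is an upper-tail event, and I would apply the multiplicative Chernoff bound $\Pr[W \ge (1+\epsilon)\mu] \le \exp\!\left(-\tfrac{\epsilon^2 \mu}{2+\epsilon}\right)$ with $\mu = p|S|$ and $1+\epsilon = \tfrac{1}{2p}$, i.e.\ $\epsilon = \tfrac{1-2p}{2p}$.

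The last step is just the arithmetic of simplifying the exponent: with these values $\epsilon^2 \mu = \tfrac{(1-2p)^2}{4p}|S|$ and $2 + \epsilon = \tfrac{1+2p}{2p}$, so $\tfrac{\epsilon^2 \mu}{2+\epsilon} = \tfrac{(1-2p)^2}{2(1+2p)}|S| = \tfrac{(1-2p)^2}{4p+2}\Delta(x_1, x_2)$, which gives exactly the stated bound. I do not expect a real obstacle: the two points that need a moment's care are the reduction in the first paragraph (keeping straight which value of $\eta_\ell$ causes a disagreement with $x_1$ versus $x_2$ on coordinates of $S$) and choosing the precise form of the Chernoff inequality so that the constant comes out as $\tfrac{(1-2p)^2}{4p+2}$. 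One could alternatively invoke Hoeffding's inequality, which yields the cleaner (and in fact stronger) bound $\exp\!\left(-\tfrac{(1-2p)^2}{2}\Delta(x_1,x_2)\right)$, but since only the stated form is needed downstream, the Chernoff computation above suffices.
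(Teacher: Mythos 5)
The paper states Lemma~\ref{lem:noise-chernoff} with only a citation to \cite{LP24} and no proof of its own, so there is no in-paper argument to compare against. Your reduction is exactly the natural one and is correct: off $S$ the two distances agree term-by-term, on $S$ a coordinate contributes to $\Delta(x_1, x_1+\eta)$ iff $\eta_\ell = 1$ and to $\Delta(x_2, x_1+\eta)$ iff $\eta_\ell = 0$, so the event is precisely $W \geq |S|/2$ with $W \sim \mathrm{Bin}(|S|,p)$. The Chernoff form $\Pr[W \geq (1+\epsilon)\mu] \leq \exp(-\epsilon^2\mu/(2+\epsilon))$ is a standard consequence of the raw bound $\exp(-\mu[(1+\epsilon)\ln(1+\epsilon)-\epsilon])$ (one checks $(1+\epsilon)\ln(1+\epsilon)-\epsilon \geq \epsilon^2/(2+\epsilon)$ for $\epsilon \geq 0$), and your arithmetic with $1+\epsilon = 1/(2p)$ reproduces the constant $\tfrac{(1-2p)^2}{4p+2}$ exactly; the degenerate case $x_1 = x_2$ is handled correctly, and your observation that Hoeffding would give the strictly sharper constant $\tfrac{(1-2p)^2}{2}$ (since $4p+2 > 2$) is also right, so the lemma as stated is deliberately in the weaker Chernoff form inherited from \cite{LP24}.
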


\begin{lemma}\label{lem:returncloser}
    Let $g_j \in \calg$, and let $\hat{j} = \dec{\calg}(g_j + \eta_p)$.
    Suppose that all {\chunkn}s are decoded correctly.  Then 
    \[ \Delta(g_j + \eta_p, g_{\hat{j}}) \leq \Delta(g_j + \eta_p, g_j).\]
\end{lemma}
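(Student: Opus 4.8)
I would give a single argument that covers all three cases of Lemma~\ref{lem:dec-g} at once, with one extra observation for Case~3. Write $x = g_j + \eta_p$. The key point is that, whenever all {\chunkn}s are decoded correctly, the decoder restricts its guessing to a single contiguous coordinate block $H' \subseteq \{0,\ldots,d-1\}$ -- the block handed to $\dec{\calu}$ in Line~\ref{line:est1}, \ref{line:est2}, \ref{line:est3}, or~\ref{line:est4} of Algorithm~\ref{alg:Dec} -- and the resulting output $g_{\hat{j}}$ has two properties: (a) $g_{\hat{j}}$ agrees with $g_j$ on every coordinate \emph{outside} $H'$; and (b) if $v$ denotes the fixed string that the algorithm adds before invoking $\dec{\calu}$ (this is $\what_1[H']$ in Cases~1 and~2, and $b_1^{D(\calc)}$ or $\bar b_1^{D(\calc)}$ in Case~3), then $g_j[H'] + v = \enc{\calu}(v^\ast)$ is a unary codeword for the ``true'' value $v^\ast$, while $g_{\hat{j}}[H'] = v + \enc{\calu}(u)$ with $u = \dec{\calu}(x[H'] + v)$. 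Granting (a) and (b), the lemma is immediate: since $x[H'] + v = g_j[H'] + v + \eta_p[H'] = \enc{\calu}(v^\ast) + \eta_p[H']$ and $\dec{\calu}$ returns a unary codeword nearest to its input,
\[ \Delta\big(x[H'],\, g_{\hat{j}}[H']\big) \;=\; \Delta\big(\enc{\calu}(u),\, \enc{\calu}(v^\ast) + \eta_p[H']\big) \;\le\; \Delta\big(\enc{\calu}(v^\ast),\, \enc{\calu}(v^\ast) + \eta_p[H']\big) \;=\; \Delta\big(x[H'],\, g_j[H']\big), \]
and by (a) the coordinates outside $H'$ contribute equally to both distances, so $\Delta(x, g_{\hat{j}}) \le \Delta(x, g_j)$.

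\textbf{Verifying (a) and (b) in Cases 1 and 2.} Take Case~1 (Case~2 is symmetric, with $H'$ a prefix rather than a suffix of $h_i$). By Lemma~\ref{lem:dec-g}, correct chunk decoding gives $\enc{\calc}(\vhat_1) = c_{i+1}$ and $h_{i,\bar{j}} \ge n + D(\calc)$; repeating the computation from the proof of Lemma~\ref{lem:alg-case12} yields $\hat{\iota}-1 = i$, $\what_1 = w_i$, $\hat{j} \in [r_i, r_{i+1})$, and moreover $\hat{j} - r_i$ is at least the number of differing coordinates of $w_i,w_{i+1}$ lying in $s_1 \cup \tilde{c}_1$ and at most $\Delta(w_i,w_{i+1}) - 1$. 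Consequently both $g_j$ and $g_{\hat{j}}$ agree with $w_{i+1}$ on $s_1$ and $\tilde{c}_1$, with $w_i$ on the final coordinate $d-1$, and with $w_i = w_{i+1}$ on every coordinate where $c_i$ and $c_{i+1}$ agree -- precisely the complement of $H' = \{\ell \in h_{\hat{\iota}-1} : \ell \ge n + D(\calc)\}$; this is (a). For (b), Observation~\ref{obs:unary} restricted to the suffix $H'$ of $h_i$ gives $g_j[H'] + w_i[H'] = \enc{\calu}(v^\ast)$, and Line~\ref{line:est1} sets $u = \dec{\calu}(x[H'] + \what_1[H'])$ with $\what_1 = w_i$, which is exactly the required form.

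\textbf{Case 3.} Here the algorithm forms two candidates $\hat{j}_1, \hat{j}_2$ and outputs the one whose codeword is nearest to $x$ (Line~\ref{line:case3-jhat}), so $\Delta(x, g_{\hat{j}}) = \min_{\ell \in \{1,2\}} \Delta(x, g_{\hat{j}_\ell})$. By the last sentence of Lemma~\ref{lem:dec-g}(3), either $h_{i,\bar{j}} \in [0, D(\calc))$ -- in which case $\tilde{c}_1(g_j) = \tilde{c}_2(g_j) = c_i$, the computed $\what$ equals $w_i$, and the candidate $\hat{j}_1$ (with $H' = [0, D(\calc))$, the $s_1$-block) is the ``correct guess'' -- or $h_{i,\bar{j}} \in [d-D(\calc), d)$ -- in which case $\what = w_{i+1}$ and $\hat{j}_2$ (with $H' = [2D(\calc)+2n, d)$, the $s_3$-block) is the correct guess. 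For the correct $\ell$, properties (a) and (b) hold by essentially the same bookkeeping as in Cases~1 and~2: $g_j$ and $g_{\hat{j}_\ell}$ agree outside $H'$ because both differ, on coordinates inside $H'$ only, from a common codeword among $\{w_i, w_{i+1}\}$; and $g_j[H'] + v$ is once more a unary codeword (a direct computation using that the $s_1$- and $s_3$-blocks of $w_i$ and $w_{i+1}$ are all-zero/all-one strings). Hence $\Delta(x, g_{\hat{j}_\ell}) \le \Delta(x, g_j)$ for that $\ell$, and therefore $\Delta(x, g_{\hat{j}}) \le \Delta(x, g_j)$.

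\textbf{Where the work is.} The conceptual core -- a nearest-codeword decoder never does worse, in distance, than the true value -- is one line. Essentially all the effort is in the bookkeeping behind (a) and (b): checking, case by case under the correct-chunk-decoding hypothesis, that $g_{\hat{j}}$ genuinely coincides with $g_j$ on \emph{every} coordinate outside the active block $H'$ -- this requires handling the non-differing coordinates of the $\tilde{c}_\ell$ chunks and the trailing coordinate $d-1$, which is where the facts $\hat{j} \in [r_i, r_{i+1})$ from Lemmas~\ref{lem:alg-case12} and~\ref{lem:alg-case3-dist} are used -- and that the string fed to $\dec{\calu}$ is $\enc{\calu}(v^\ast) + \eta_p[H']$ with no hidden offset. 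The only genuinely new wrinkle beyond those lemmas is in Case~3, where one must observe that it suffices to analyze the ``correct guess'' branch; this is harmless precisely because the algorithm outputs the closer of the two candidates.
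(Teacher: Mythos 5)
Your proof is correct and follows essentially the same route as the paper's: both reduce the comparison to the single block $H'$ handled by $\dec{\calu}$, observe via Observation~\ref{obs:unary} that $g_j[H']$ plus the appropriate fixed string is a unary codeword, invoke that $\dec{\calu}$ is a nearest-codeword decoder, and handle Case~3 by noting the algorithm outputs the closer of two candidates so only the correct branch needs to be analyzed. Your abstraction into properties (a) and (b) is a cleaner way to package the same bookkeeping that the paper carries out concretely for Case~1 and then declares "similar" for the other cases, but it is not a substantively different argument.
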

\begin{proof}
    By the analysis in the proof of Lemmas~\ref{lem:alg-case12} and \ref{lem:alg-case3-dist}, if all {\chunkn}s are decoded correctly, then all the quantities computed in Algorithm~\ref{alg:Dec} \emph{before} $u$ (in Cases 1 and 2), or before $u_1,u_2$ (in Case 3) are what they ``should'' be.  That is, in Cases 1 and 2, all of the quantities computed before  Lines~\ref{line:case1-u} and \ref{line:case2-u}, respectively are what the comments claim they should be.  In Case 3, all of the comments computed before Line~\ref{line:case3-u}) are what the comments claim they should be. 
 Thus, any error in $\hat{j}$ comes from the estimates of $u$ (in Cases 1 or 2) or $u_1$ and $u_2$ (in Case 3).  
 
 We first work out Case 1.
 First, we observe that it suffices to look only on the set $H'$ defined as in Algorithm~\ref{alg:Dec}.  That is, it suffices to show that
 \[ \Delta((g_j + \eta_p)[H'], g_{\hat{j}}[H'] ) \leq \Delta (g_j + \eta_p)[H'], g_j[H']).\]
 Indeed, $g_j$ and $g_{\hat{j}}$ differ only on $H'$.
Next, recall from Observation~\ref{obs:unary} that $(g_j + w_i)[h_i] = \enc{\calu}(\bar{j})$; that is, restricted to the elements in $h_i$, $g_j + w_i$ has $\bar{j}$ ones followed by all zeros.  Since $\bar{j} \geq n + D(\calc)$ in Case 1 (as shown in the proof of Lemma~\ref{lem:alg-case12}), $(g_j + w_i)[H']$ is $\bar{j} - (\Delta(c_i, c_{i+1}) +D(\calc))$ ones followed by zeros.
Thus,
\[x[H'] + \hat{w}_1[H'] = g_j[H'] + \eta[H'] + w_i[H']\]
is a vector of $\bar{j} - (\Delta(c_i, c_{i+1})+D(\calc))$ ones followed by zeros, plus the noise from $\eta[H']$.
Therefore, 
\[ u = \dec{\calu}(x[H'] + \hat{w}_1[H'])\]
is the decoding of $\enc{\calu}(\bar{j} - (\Delta(c_i,c_{i+1}) + D(\calc))) + \eta_p[H']$. 
For notational convenience, in the following we will introduce the notation $X = D(\calc) + \Delta(c_i,c_{i+1})$.  With this notation (and the fact that $\dec{\calu}$ returns the closest codeword in $\calu$), we conclude that
\begin{equation}\label{eq:dist2}
\Delta( \enc{\calu}(u), \enc{\calu}( \bar{j} - X ) + \eta_p[H']) \leq \Delta ( \enc{\calu}(\bar{j} - X), \enc{\calu}( \bar{j} - X) + \eta_p[H']). 
\end{equation}
We claim that in fact the first of the terms in \eqref{eq:dist2} is equal to $\Delta(g_{\hat{j}}[H'], x[H'])$ and the second is equal to $\Delta(g_j[H'], x[H'])$, which will immediately prove the statement in Case 1.
To see the first part of the claim, first 
notice that in Algorithm~\ref{alg:Dec} (using the fact that all the estimates are what they ``should'' be, as above), we have
$\hat{j} = u + X + r_i$, so $u = \bar{\hat{j}} - X.$
Then 
\begin{align*}
\Delta(g_{\hat{j}}[H'], x[H']) &= \| g_{\hat{j}}[H'] + g_j[H'] + \eta_p[H'] \| \\
&= \| \enc{\calu}(\bar{\hat{j}} - X) + \enc{U}(\bar{j} - X) + \eta_p[H'] \| \\
&= \| \enc{\calu}(u) + \enc{\calu}(\bar{j} - X) +  \eta_p[H'] \| \\
&= \Delta(\enc{\calu}(u), \enc{\calu}(\bar{j}-X) + \eta_p[H'])
\end{align*}
Above, we used the fact that on $H'$, $g_{\hat{j}}$ and $g_j$ differ precisely on the indices between $\bar{\hat{j}} - X$ and $\hat{j} - X$.
This proves the first part of the claim.  For the next part, we have that
\begin{align*}
    \Delta(g_j[H'], x[H']) &= \|\eta[H']\|,
\end{align*}
which is the right hand side of \eqref{eq:dist2}.
This finishes proving the claim, and the lemma in this case.

Case 2 is similar to Case 1 and we omit the analysis.
In Case 3,  we have two candidates, $u_1$ and $u_2$.  By an analysis similar to the above, at least one of the following statements hold:
\begin{itemize}
    \item $\hat{w}_1 = w_i$ and
$ u_1 = \dec{\calu}(\enc{\calu}(j) + \eta_p'),$
or;
\item
$\hat{w}_2 = w_{i+1}$ and
$u_2 = \dec{\calu}(\enc{\calu}(j - 2D(\calc) - 2n) + \eta_p'),$
\end{itemize}
where in both cases $\eta_p'$ has i.i.d. $\ber(p)$ coordinates.
Thus, if the first is the case, we have that 
$$\Delta(g_j + \eta_j, g_{\jhat_1}) \leq \Delta(g_j + \eta_j, g_j)$$
and in the second case we have (again with an analysis similar to that above) that
$$
\Delta(g_j + \eta_j, g_{\jhat_2}) \leq \Delta(g_j + \eta_j, g_j).
$$
Thus, by the definition of $\hat{j}$ in this case (Line~\ref{line:case3-jhat}), we have
\[ \Delta(g_j + \eta_p, g_{\hat{j}}) = \min\{ \Delta(g_j + \eta_j, g_{\jhat_1}), \Delta(g_j + \eta_j, g_{\jhat_2})\} \leq \Delta(g_j + \eta_j, g_j),\]
as desired.
\end{proof}
\begin{theorem}\label{thm:main}
Fix $p \in (0,1/2)$.
    Let $\calc \subseteq \F_2^n$ be a linear code. Let $\calg$ be defined as in Definition \ref{def:calg} from $\calc$. Let $\eta_p \in \F_2^d\sim \ber(p)^d$ Then
\begin{equation}
\Pr[|j - \dec{\calg}(\enc{\calg}(j) + \eta_p)| > t] \leq  \gamma \exp(-\alpha t) + 5P_{\text{fail}}(\calc)
\end{equation}
where $P_{\text{fail}}(\calc)$ is the block error probability of $\calc$, and $\alpha$ and $\gamma$ are constants given by $\alpha = -\frac{(1-2p)^2}{4p+2}$ and $ \gamma = \frac{2}{1-\exp(-\alpha)}$.
\end{theorem}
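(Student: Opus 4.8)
The plan is to fix the integer $j$, condition on the event $E$ that every {\chunkn} of $g_j$ is decoded correctly in Lines~\ref{line:start-small-dec}--\ref{line:end-small-dec} of Algorithm~\ref{alg:Dec}, and split
\[ \Pr[\,|j - \hat{j}| > t\,] \;\le\; \Pr[\,\overline{E}\,] \;+\; \Pr[\,|j - \hat{j}| > t \text{ and } E\,], \]
where $\hat{j} = \dec{\calg}(\enc{\calg}(j) + \eta_p)$. I will bound the first term by $5\pfail(\calc)$ and the second by $\gamma\exp(-\alpha t)$, using the lemmas already proved.

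First I would bound $\Pr[\overline{E}]$. Since $j$ is fixed, which of the five substrings $s_1,s_2,s_3,\tilde{c}_1,\tilde{c}_2$ of $g_j$ is a {\chunkn} is a deterministic fact, and the coordinates of $\eta_p$ restricted to any one chunk are i.i.d.\ $\ber(p)$. If $\tilde{c}_\ell(x)$ is a {\chunkn} then $\tilde{c}_\ell(g_j)\in\{c_i,c_{i+1}\}$ is a fixed codeword of $\calc$, so by Definition~\ref{def:prob-fail} the decoder $\dec{\calc}$ errs on it with probability at most $\pfail(\calc)$; if $s_\ell(x)$ is a {\chunkn} then $s_\ell(g_j)\in\{0^{D(\calc)},1^{D(\calc)}\}$ and Lemma~\ref{lem:dist-2dc} says $\maj_{D(\calc)}$ errs with probability at most $\pfail(\calc)$. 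A union bound over the at most five {\chunkn}s then gives $\Pr[\overline{E}] \le 5\pfail(\calc)$.

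Next I would handle the term $\Pr[|j - \hat{j}| > t \text{ and } E]$. Conditioned on $E$, Lemmas~\ref{lem:alg-case12} and \ref{lem:alg-case3-dist} together cover every possible location of the crossover point $h_{i,\bar{j}}$ — the former when it lies in the interior region $\tilde{c}_1\cup s_2\cup\tilde{c}_2$, the latter when it lies in a boundary block $s_1$ or $s_3$ — and in every case their conclusion includes the distance identity $\Delta(g_j, g_{\hat{j}}) = |j - \hat{j}|$. Separately, Lemma~\ref{lem:returncloser} gives $\Delta(g_j + \eta_p, g_{\hat{j}}) \le \Delta(g_j + \eta_p, g_j)$. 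Hence, on the event $E \cap \{|j - \hat{j}| > t\}$ there is an index $j'$ (namely $\hat{j}$) with $|j - j'| > t$, $\Delta(g_j, g_{j'}) = |j - j'|$, and $\Delta(g_j + \eta_p, g_{j'}) \le \Delta(g_j + \eta_p, g_j)$. For each fixed $s > t$ there are at most two indices $j'\in\{0,\dots,N-1\}$ with $|j - j'| = s$ (as $\enc{\calg}$ is injective by Lemma~\ref{lem:injective}); for each such $j'$ with $\Delta(g_j, g_{j'}) = s$, Lemma~\ref{lem:noise-chernoff} with $x_1 = g_j$, $x_2 = g_{j'}$ bounds $\Pr[\Delta(g_{j'}, g_j + \eta_p) \le \Delta(g_j, g_j + \eta_p)]$ by $\exp(-\alpha s)$, while for $j'$ with $\Delta(g_j, g_{j'})\neq|j-j'|$ the deterministic identity fails and that term contributes nothing. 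Summing over $j'$ and over $s > t$,
\[ \Pr[\,|j - \hat{j}| > t \text{ and } E\,] \;\le\; \sum_{s = t+1}^{\infty} 2\exp(-\alpha s) \;=\; \frac{2\exp(-\alpha(t+1))}{1 - \exp(-\alpha)} \;\le\; \gamma\exp(-\alpha t), \]
and combining with the bound on $\Pr[\overline{E}]$ proves Theorem~\ref{thm:main}.

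The main obstacle is really the claim, used in the third paragraph, that Lemmas~\ref{lem:alg-case12} and \ref{lem:alg-case3-dist} jointly exhaust all positions of the crossover point — including the degenerate case $j = r_i$ and the sub-case inside Lemma~\ref{lem:alg-case3-dist} where $\hat{j}$ escapes the block $[r_i, r_{i+1})$ yet still satisfies $\Delta(g_j,g_{\hat{j}}) = |j-\hat{j}|$ with $|j - \hat{j}| \le 2D(\calc)$. This is where all the delicate case analysis of Algorithm~\ref{alg:Dec} is concentrated, but it has already been discharged in those lemmas, so at the level of Theorem~\ref{thm:main} it only needs to be invoked carefully. A secondary, routine point is the union bound over $j'$: one must observe that $\{\Delta(g_j, g_{j'}) = |j - j'|\}$ is a deterministic event, so only the indices for which it holds contribute, and for those Lemma~\ref{lem:noise-chernoff} applies with Hamming distance exactly $|j - j'|$.
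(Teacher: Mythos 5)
Your proof is correct and follows essentially the same route as the paper's: condition on all full chunks being decoded correctly (union bound over the five chunks gives $5\pfail(\calc)$), then use Lemmas~\ref{lem:alg-case12} and~\ref{lem:alg-case3-dist} for the identity $\Delta(g_j,g_{\hat{j}})=|j-\hat{j}|$, Lemma~\ref{lem:returncloser} to convert to a distance inequality, Lemma~\ref{lem:noise-chernoff} for the exponential tail per candidate $j'$, and a geometric sum over the at most two candidates at each distance. If anything your write-up is slightly more careful than the paper at the step where you note that $\{\Delta(g_j,g_{j'})=|j-j'|\}$ is a deterministic property of $j'$, so indices violating it contribute zero on the conditioning event.
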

\begin{proof}
    Let $\cals$ be the event that at least one of the {\chunkn}s in $g_j = \enc{\calg}(j)$ is decoded incorrectly. 
    Let $\jhat  = \dec{\calg}(\enc{\calg}(j) + \eta_p) $. Then 

\begin{align*}
    \Pr[|j -\jhat| > t ] &=  \Pr\left[ | j - \jhat | > t\, |\, \bar{\cals}\right] \cdot \Pr[ \bar{\cals} ]+ \Pr\left[|j-\jhat| > t\, |\, \cals \right] \cdot \Pr[\cals]\\
    &\leq \Pr\left[ | j - \jhat | > t \,|\, \bar{\cals}\right] + \Pr[\cals]
\end{align*}
   
    Let $i \in \{0,\ldots, 2^k-1\}$ be such that $j \in [r_i, r_{i+1})$.
     We will bound each of the two terms above individually.

    We begin with $\Pr\left[ | j - \jhat | > t \,|\, \bar{\cals}\right]$.  There are two scenarios, depending on whether $j$ is safely in the middle of the interval $[r_i, r_{i+1})$ (that is, in the middle three chunks), or if it is near the edges (the outermost chunks).  In either case, if all {\chunkn}s are decoded correctly, then $\Delta(\enc{\calg}({j}), \enc{\calg}({\jhat})) = |j - \hat{j}|$.  Indeed, in the first case this follows from
    Lemma~\ref{lem:alg-case12}, while in the second case it follows from Lemma~\ref{lem:alg-case3-dist}.
    Thus, we see that in either case, the probability $\dec{\calg}$ of returning a particular $\hat{j}$ is
    \begin{align*}
          \Pr\left[\jhat \,|\, \bar{\cals}\right] &\leq \Pr\left[\Delta (\enc{\calg} (j) + \eta_p , \enc{\calg}(\hat{j}) )\leq \Delta (\enc{\calg}(j) + \eta_p , \enc{\calg} (j))\right]\\
            &\leq \exp(-\alpha \Delta(\enc{\calg}(j),\enc{\calg}(\hat{j})) )\\
            &= \exp (-\alpha |j - \hat{j} | ),
        \end{align*}
        Above, the first line follows from Lemma~\ref{lem:returncloser}.
    The second line follows from Lemma~\ref{lem:noise-chernoff}, while the last line follows from the fact that $\Delta(\enc{\calg}{j}, \enc{\calg}(\hat{j})) = |j - \hat{j}|$ as noted above.
 Thus, for any integer $z$ that might be equal to $|j - \hat{j}|$, we have
    \[ \Pr[|j - \hat{j}| = z, \bar{\cals} ] \leq 2\exp(-\alpha z), \]
    where the factor of two comes from the fact that $\hat{j}$ might be either $j + z$ or $j-z$.
    Thus,
        \begin{align*}
            P[|j - \jhat | \geq t \,|\, \bar{\cals}]&\leq \sum_{z = t }^{\infty} 2 \exp(-\alpha z )
              =\frac{2\exp(-\alpha t) }{1-\exp(-\alpha)}.
        \end{align*}

 Now we turn our attention to the second term, $\Pr[\cals]$, the probability that at least one of the {\chunkn}s is decoded incorrectly.   
 The {\chunkn}s $\tilde{c}_\ell(g_j)$ for $\ell \in \{1,2\}$ are codewords in $\calc$ and are decoding using $\dec{\calc}$.  Thus, the probability that either of these chunks (assuming they are {\chunkn}s) are decoded incorrectly is at most twice the failure probability is $\pfail(\calc)$. 
 
 On the other hand, the {\chunkn}s $s_\ell$ for $\ell \in \{1,2,3\}$ are repetition codes of length $\Delta(\calc)$.  Then Lemma~\ref{lem:dist-2dc} that the probability that any of these (assuming it is a \chunkn) is at most $\pfail(\calc)$, so the probability that any of these fail is at most $3\pfail(\calc)$.  Altogether, the probability that at least one \chunk is decoded incorrectly is at most
  $  \Pr[\cals]\leq 5\pfail(\calc).$
Summing both terms up we see that
\begin{equation*}
    \Pr\left[|j-\jhat| \geq t \right] \leq \frac{2\exp(-\alpha t) }{1-\exp(-\alpha)} + 5 \pfail(\calc),
\end{equation*}
which completes the proof of the theorem.
\end{proof}

\section*{Acknowledgment}
MW and DF are partially supported by NSF Grants CCF-2231157 and CCF-2133154.  The first author thanks Rasmus Pagh for bringing our attention to this problem.

\bibliographystyle{IEEEtran}
\bibliography{refs}

\end{document}